\documentclass[prodmode,acmtist]{acmsmall}

\usepackage{color}
\usepackage{amsmath}
\usepackage{amssymb}
\usepackage{graphicx}
\usepackage{float}
\usepackage{hyperref}
\usepackage[noend]{algpseudocode}
\usepackage{algorithm}
\usepackage[usenames,dvipsnames,svgnames,table]{xcolor}
\usepackage{tikz}
\usepackage{units}
\usepackage{pgfplots}
\pgfplotsset{compat=newest}
\usepackage{comment}
\usetikzlibrary{backgrounds}
\usetikzlibrary{calc}
\usepackage{refcount}

\makeatletter
\def\@seteven{%
    \@nfeventrue
    \edef\@tmpnf{\getpagerefnumber{@nf\thefigure}}%
    \ifodd\@tmpnf\relax
        \@nfevenfalse
    \fi
    \label{@nf\thefigure}%
    \edef\@tmpnfx{\if@nfeven e\else o\fi}%
    \edef\@tmpnf{%
        \write\@unused{%
            \noexpand\ifodd \noexpand\c@page
                \noexpand\if \@tmpnfx e%
                    \noexpand\@nfmsg{\thefigure}
                \noexpand\fi
            \noexpand\else
                \noexpand\if \@tmpnfx o%
                    \noexpand\@nfmsg{\thefigure}%
                \noexpand\fi
            \noexpand\fi
        }%
    }%
    \@tmpnf
}
\makeatother

\usetikzlibrary{arrows,shapes}

\tikzset{
  treenode/.style = {align=center,ellipse,draw}
}

\tikzstyle{vertex}=[solid,ellipse,black,inner sep=1pt,draw]
\tikzstyle{greenedge} = [solid,green,draw,line width=3pt,-]
\tikzstyle{rededge} = [loosely dotted,red,draw,line width=3pt,-]
\tikzstyle{edge} = [solid,black,draw,line width=1pt,-]
\usepackage{setspace}

\newcommand{\myalgnumfont}{\fontsize{8pt}{9pt}\selectfont}
\algrenewcommand{\alglinenumber}[1]{\hss\myalgnumfont #1:}

\newtheorem{mydef}{Definition}

\DeclareMathOperator*{\argmax}{arg\,max}

\newenvironment{myproof}[1][\proofname]{\hskip12pt{\sc #1.\enspace}\mbox{}}{\endproof}

\frenchspacing

\makeatletter
\newcommand\footnoteref[1]{\protected@xdef\@thefnmark{\ref{#1}}\@footnotemark}
\makeatother

\def\mplusa{m+a}

\newcommand{\convexpath}[2]{
[   
    create hullnodes/.code={
        \global\edef\namelist{#1}
        \foreach [count=\counter] \nodename in \namelist {
            \global\edef\numberofnodes{\counter}
            \node at (\nodename) [draw=none,name=hullnode\counter] {};
        }
        \node at (hullnode\numberofnodes) [name=hullnode0,draw=none] {};
        \pgfmathtruncatemacro\lastnumber{\numberofnodes+1}
        \node at (hullnode1) [name=hullnode\lastnumber,draw=none] {};
    },
    create hullnodes
]
($(hullnode1)!#2!-90:(hullnode0)$)
\foreach [
    evaluate=\currentnode as \previousnode using \currentnode-1,
    evaluate=\currentnode as \nextnode using \currentnode+1
    ] \currentnode in {1,...,\numberofnodes} {
-- ($(hullnode\currentnode)!#2!-90:(hullnode\previousnode)$)
  let \p1 = ($(hullnode\currentnode)!#2!-90:(hullnode\previousnode) - (hullnode\currentnode)$),
    \n1 = {atan2(\x1,\y1)},
    \p2 = ($(hullnode\currentnode)!#2!90:(hullnode\nextnode) - (hullnode\currentnode)$),
    \n2 = {atan2(\x2,\y2)},
    \n{delta} = {-Mod(\n1-\n2,360)}
  in 
    {arc [start angle=\n1, delta angle=\n{delta}, radius=#2]}
}
-- cycle
}
 
\begin{document}

\markboth{Filippo Bistaffa et al.}{Algorithms for Graph-Constrained Coalition Formation in the Real World}

\title{Algorithms for Graph-Constrained Coalition Formation\\in the Real World}
\author{FILIPPO BISTAFFA and ALESSANDRO FARINELLI
\affil{University of Verona}
JES\'US CERQUIDES and JUAN RODR\'IGUEZ-AGUILAR
\affil{IIIA-CSIC}
SARVAPALI D. RAMCHURN
\affil{University of Southampton}
}

\begin{abstract}
Coalition formation typically involves the coming together of multiple, heterogeneous, agents to achieve both their individual and collective goals. In this paper, we focus on a special case of coalition formation known as Graph-Constrained Coalition Formation (GCCF) whereby a network connecting the agents constrains the formation of coalitions. We focus on this type of problem given that in many real-world applications, agents may be connected by a communication network or only trust certain peers in their social network. We propose a novel representation of this problem based on the concept of edge contraction, which allows us to model the search space induced by the GCCF problem as a rooted tree. Then, we propose an anytime solution algorithm (CFSS), which is particularly efficient when applied to a general class of characteristic functions called $\mplusa$ functions. Moreover, we show how CFSS can be efficiently parallelised to solve GCCF using a non-redundant partition of the search space. We benchmark CFSS on both synthetic and realistic scenarios, using a real-world dataset consisting of the  energy consumption of a large number of households in the UK. Our results show that, in the best case, the serial version of CFSS is 4 orders of magnitude faster than the state of the art, while the parallel version is 9.44 times faster than the serial version on a 12-core machine. Moreover, CFSS is the first approach to provide anytime approximate solutions with quality guarantees for very large systems of agents (i.e., with more than 2700 agents).
\end{abstract}

\category{I.2}{Computing Methodologies}{Artificial Intelligence}

\terms{Algorithms}

\acmformat{}

\maketitle

\begin{bottomstuff}
Author's addresses: F. Bistaffa {and} A. Farinelli, Department of Computer Science, University of Verona, Verona, Italy; J. Cerquides {and} J. Rodr\'iguez-Aguilar, IIIA-CSIC, Barcelona, Spain; S. D. Ramchurn, Electronics and Computer Science, University of Southampton, Southampton, United Kingdom.
\end{bottomstuff}

\section{Introduction}

\noindent
Coalition Formation (CF) is one of the key approaches to establishing collaborations in multi-agent systems. It involves the coming together of multiple, possibly heterogeneous, agents in order to achieve either their individual or collective goals, whenever they cannot do so on their own. Building upon the seminal work of \citeN{Shehory1998}, \citeN{Sandholm99} identify the key computational tasks involved in the CF process: (i) coalitional value calculation: defining a \emph{characteristic function} which, given a coalition as an argument, provides its coalitional value; 
(ii) coalition structure generation (CSG): finding a partition of the set of agents (into disjoint coalitions) that maximises the sum of the values of the chosen coalitions; and (iii) payment computation: finding the transfer or payment to each agent to ensure it is fairly rewarded for its contribution to its coalition. 

On the one hand, typical CF approaches assume that the values of all the coalitions are stored in memory, allowing to read each value in constant time. However, this assumption makes the size of the input of the CSG and payment computation problems exponential, as the entire set of coalitions (whose size is $2^n$ for $n$ agents) must be mapped to a value. 
On the other hand, CSG and payment computation are combinatorial in nature and most existing solutions do not scale well with the number of agents. In this paper, we focus on the CSG problem to provide solutions that can be applied to real-world problems, which usually involve hundreds or thousands of agents.

The computational complexity of the CSG problem is due to the size of its search space,\footnote{\label{fn:complexity}A set of $n$ agents can be partitioned in $\Omega((\frac{n}{\ln(n)})^n)$ ways, i.e. the $n$\textsuperscript{th} Bell number~\cite{berend2010improved}.} which contains every possible subset of agents as a potential coalition. However, in many real-world applications, there are constraints that may limit the formation of some coalitions~\cite{rahwan2011constrained}. Specifically, we focus on a specific type of constraints that encodes synergies or relationships among the agents and that can be expressed by a graph~\cite{Myerson1977}, where nodes represent agents and edges encode the relationships between the agents. In this setting, edges enable connected agents to form a coalition and a coalition is considered feasible only if its members represent the vertices of a connected subgraph.
Such constraints are present in several real-world scenarios, such as social or trust constraints (e.g., energy consumers who prefer to group with their friends and relatives in forming energy cooperatives~\cite{switch}), physical constraints (e.g., emergency responders may join specific teams in disaster scenarios where only certain routes are available), or communication constraints (e.g., non-overlapping communication loci or energy limitations for sending messages across a network from one agent to another). 
Hereafter, we shall refer to the CF problem where coalitions are encoded by means of graphs as \emph{Graph-Constrained Coalition Formation} (GCCF). 
It is important to note that the addition of these constraints does not lower the complexity of the problem. In particular, \citeN{Voice2012b} show that the GCCF problem remains NP-complete.

In this work, we are primarily interested in developing CSG solutions for GCCF that are deployable in real-world scenarios involving hundreds or thousands of agents, such as collective energy purchasing~\cite{vinyals-ENERGYCON-12,eps351521} and ridesharing~\cite{aaai}. Notice that, since the computation of an optimal solution is often infeasible for large-scale systems, our CSG algorithm should be able to provide anytime approximate solutions with good quality guarantees. Moreover, the memory requirements should scale well with the number of agents.

In this context, the works by \citeANP{Voice2012a}~\citeyear{Voice2012b,Voice2012a} represent the state of the art for GCCF. However, there are some drawbacks that hinder their applicability. \citeN{Voice2012b} make assumptions that do not hold in most real-world applications (see Section~\ref{sec:stategccf}), whereas the memory requirements of the approach in~\cite{Voice2012a} grow exponentially in the number of agents, hence limiting the scalability. 
To overcome these drawbacks, in this paper we propose CFSS (Coalition Formation for Sparse Synergies), the first approach for GCCF that computes anytime solutions with theoretical quality guarantees for large systems (i.e., more than 2700 agents). As recently noticed in a survey on CSG by \citeN{rahwan2015coalition}, previous approaches in the CF literature have been either applied to small-scale synthetic scenarios, or, in the case of heuristic approaches, cannot provide any theoretical guarantees on the quality of their solutions.
Moreover, we provide P-CFSS, a parallelised version of CFSS that exploits multi-core CPUs.
Finally, we identify a general class of closed-form functions, denoted as $\mplusa$, for which we provide upper bounds, allowing for coalitional values to be computed online (i.e., their storage can be avoided).

In more detail, this paper advances\footnote{This paper subsumes the work of \citeN{cfss} and the non-archival work of \citeN{optmas}.} the state of the art in the following ways:
\begin{enumerate}
\item  We provide a new representation for GCCF which, by using edge contractions on the graph, can efficiently build a search tree where each node is a feasible coalition structure, while avoiding redundancy (i.e., each solution appears only once). 
\item We identify a general class of characteristic functions, i.e., $\mplusa$ functions, which are expressive enough to represent a wide range of real-world GCCF problems.
\item We propose CFSS, a branch and bound algorithm that, when applied to CF with $\mplusa$ functions, can solve the CSG problem for GCCF and can provide anytime approximate solutions with good quality guarantees.
\item We propose P-CFSS, a parallel version of CFSS that is up to 9.44 times faster than the serial version on a 12-core machine.
\end{enumerate} 
\noindent
The rest of the paper is organised as follows. Section \ref{sec:relwork} discusses the relationship between our work and the existing literature, and Section \ref{sec:problem} formally defines GCCF. Section \ref{sec:search} explains how we generate our search space, and Section~\ref{sec:6} details the domains used to benchmark CFSS, our branch and bound approach described in Section \ref{sec:cfss}, and Section~\ref{sec:exp} discusses our empirical evaluation. Finally, Section~\ref{sec:conclusions} concludes the paper.
 
\section{Related work}
\label{sec:relwork}

In this section we elaborate on related work in the areas of CF (Section~\ref{sec:relwcf}), team formation (Section~\ref{sec:teamformation}), graph theory (Section~\ref{sec:graphtheory}) and optimisation (Section~\ref{sec:rloptimisation}).

\subsection{Coalition Formation}\label{sec:relwcf}

\subsubsection{Classic CSG algorithms}
A number of algorithms have been developed to solve CSG for the general CF problem where all coalitions can be formed (i.e., non-GCCF). These range from mixed-integer programming to branch and bound techniques~\cite{Rahwan2009} through Dynamic Programming (DP)~\cite{idp}. In particular, \citeN{Sandholm99} and \citeN{dang:jennings:2004} focused on providing anytime solutions with quality guarantees. However, their solutions do not scale (growing in $O(n^n)$) and, as discussed by \citeN{Voice2012a}, they cannot be employed to solve CSG for GCCF, since assigning artificially low values (such as $-\infty$) to infeasible coalitions would not be suitable for assessing valid bounds. 
Finally, \citeANP{Rahwan2009}~\citeyear{rahwan:jennings:2008b,Rahwan2009,eps337164} developed IDP-IP$^*$, the state of the art algorithm for classic CSG. However, IDP-IP$^*$ is limited to tens of agents (30 at most) due to its memory requirements (i.e., $\Theta\left(2^n\right)$), as such approaches need to store all coalition values.

To overcome the intractability due to such memory requirements, a number of works~\cite{ohta2009coalition,ueda2011concise,tran2013efficient} 
have examined alternative function representations, which allow to reduce the computational complexity of the associated CF problems. Unfortunately, their models may not be able to capture the realistic nature of functions such as the collective energy purchasing one we consider here. On the one hand, this function cannot be concisely expressed as a MC network, as its MC network would require an exponential amount of memory with respect to the number of agents. On the other hand, the concepts of agent types/skills imply that it is possible to fully characterise the contribution of each agent on the basis of a small set of features, in order to achieve the conciseness of the representation. However, in our scenario each agent is associated to its own energy consumption profile, resulting in a number of types/skills equal to the number of agents.
Hence, we do not compare against these works, since we are interested in developing techniques that can handle complex functions such as the collective energy purchasing function.

\subsubsection{CSG algorithms based on heuristics}\label{sec:heu}
Very few heuristic solutions to the CSG problem have been developed over the last few years. For example, \citeN{Sen-AAMAS-00} propose a solution based on genetic algorithms, \citeN{DosSantos-JASC-12} propose an approach based on swarm intelligence (the bee clustering algorithm) for task allocation in the RoboCup Rescue domain, and \citeN{eps351521} propose an approach based on hierarchical clustering. Meta-heuristic approaches to CSG have also been investigated, for example \citeN{keinanen-09} proposes a CSG algorithm based on Simulated Annealing, while \citeN{DiMauro-etal-10} use a stochastic local search approach (GRASP) to iteratively build a coalition structure of high quality. Even if these approaches are not able to provide any guarantees on the solution quality, they can compute solutions for large numbers of agents. Hence, in Section~\ref{sec:clink} we compare CFSS against C-Link, since it is the most recent heuristic approach for CSG and it has been tested using the collective energy purchasing function, which we also consider.

\subsubsection{Constrained CF}
The works discussed above focus on unconstrained CF and  cannot be directly used in contexts where constraints of various types may limit the formation of some coalitions. 
In this respect, \citeN{Shehory1998} first introduced the idea, arising in many realistic scenarios, of restricting the maximum cardinality $k$ of the coalitions in CSG, highlighting that, even though this constraint lowers the number of coalitions from exponential, i.e., $2^n$, to polynomial, i.e., $O\left(n^k\right)$, the problem remains NP-hard. Therefore, the authors propose an approximate algorithm with quality guarantees, which, however, can be used if all $O\left(n^k\right)$ coalitions are valid. 
On the other hand, \citeN{rahwan2011constrained} developed a model of Constrained Coalition Formation (CCF), differing from standard CF due to the presence of constraints that forbid the formation of certain coalitions. However, authors provide an algorithm for optimal CSG only for \emph{Basic} CCF (BCCF) games, which cannot be used to represent every GCCF problem, as shown in Section A.1 of the Appendix.

Finally, in a recent work, \citeN{iwasaki2015finding} proposed an approach to check the non-emptiness of the core when the grand coalition does not form, hence effectively addressing a CSG problem. Notice that, even though such an approach is tested on 1000 agents, the authors assume that the number of feasible coalitions is less than 10000. This assumption is not reasonable for large-scale scenarios we are interested to solve. For the sake of comparison, the number of feasible coalitions with 50 agents and $m=1$ (i.e., the simplest network topology we consider in our tests) is $\sim 150$ billions, thus severely limiting the scalability of such an approach on large-scale scenarios due to its memory requirements.

\subsubsection{State of the art algorithms for GCCF}\label{sec:stategccf}
\citeANP{Voice2012a}~\citeyear{Voice2012b,Voice2012a} were the first to propose algorithms for the GCCF problem. However, there are some drawbacks that hinder their applicability. First, \cite{Voice2012b} can only be applied to characteristic functions fulfilling the independence of disconnected members (IDM) property. The IDM property requires that, given two disconnected agents $i$ and $j$, the presence of agent $i$ does not affect the marginal contribution of agent $j$ to a coalition. This assumption is rather strong for real-world applications. As noticed by \citeN{Shehory1998} considering task allocation, the addition of a new agent to a coalition could result in intra-coalition coordination and communication costs, which
increase with the size of the coalition. Hence, realistic functions capturing such costs (such as the ones in Section~\ref{sec:functions}) do not satisfy the IDM property, hence this approach cannot be applied. 
Second, the DyCE algorithm~\cite{Voice2012a} uses DP to find the optimal coalition structure by progressively splitting the current solution into its best partition. DyCE is not an anytime algorithm and requires an exponential amount of memory in the number of agents (i.e., $\Theta\left(2^n\right)$). Hence, the scalability of this approach is limited to systems consisting of tens of agents (around 30). 

\subsection{Team formation}\label{sec:teamformation}
The problem of forming groups of agents has also been widely studied in the context of Team Formation, in which several formal definitions of such problem have been proposed.
As an example, \citeN{Gaston:2005:AND:1082473.1082508} devise a heuristic to modify the graph connecting the agents based on  local autonomous reasoning, without considering any concept of global optimal solution. The problem studied by \citeN{lappas2009finding} focuses on finding a single group of agents who possess a given set of skills, so as to minimise the communication cost within such a group. \citeN{Marcolino:2013:MTF:2540128.2540170} focus on forming a single group of agents that has the maximum strength in the set of world states. Finally, \citeN{liemhetcharat2014weighted} are interested in modelling the values of the characteristic function, based on observations of the agents.
In this paper, we address the specific group formation problem in which groups must form a partition (into disjoint coalitions) of a given set of agents, with the objective of maximising the sum of the coalitional values. Such problem is equivalent to the \emph{complete set partitioning} problem~\cite{yun1986dynamic}, i.e., the standard definition adopted in the CF literature.

\subsection{Graph theory techniques}\label{sec:graphtheory}
Our approach enumerates all the feasible partitions of the set of agents by means of the edge contraction operation, a graph theoretic technique known for its application in the algorithm to solve the Min-Cut problem~\cite{karger}.
Edge contraction has never been employed in CF~\cite{rahwan2015coalition}, hence we aim at investigating its use in this paper.
In this context, the problem of enumerating all the connected subgraphs (corresponding to feasible coalitions in GCCF scenarios) of a given graph has been studied in a number of works~\cite{Voice2012a,Skibski:2014:ASM:2615731.2615766}. Nonetheless, such algorithms can only be used to enumerate feasible coalitions, and cannot be applied to enumerate feasible coalition \emph{structures} (as CFSS does), which are \emph{sets} of disjoint feasible coalitions that collectively cover the entire set of agents.

\subsection{Submodular-supermodular decomposition}\label{sec:rloptimisation}
Submodular functions have been widely studied in the optimisation literature~\cite{schrijver2003combinatorial} in virtue of their natural \emph{diminishing returns} property, which makes them suitable for many applications~\cite{nemhauser1978analysis,narayanan1997submodular}.
Moreover, \citeANP{shekhovtsov2008lp}~\citeyear{shekhovtsov2006supermodular,shekhovtsov2008lp} focused on general functions that can be decomposed as the sum of supermodular and submodular components, exploiting such a property to achieve better results in the solution of several optimisation problems.

While this approach is similar to the decomposition we propose in Section~\ref{sec:functions}, our result holds for superadditive and subadditive functions (cf. Definition~\ref{def:supersub}), which are \emph{weaker} (i.e., more general) properties with respect to supermodularity and submodularity. In fact, it is easy to show that supermodularity (resp. submodularity) implies superadditivity (resp. subadditivity), but the converse is not true~\cite{schrijver2003combinatorial}.
 
\section{GCCF problem definition}
\label{sec:problem}
\label{subsec:problem}

The Coalition Structure Generation (CSG) problem \cite{Sandholm99,Shehory1998} takes as input a finite set of $n$ agents $\mathcal A$ and a characteristic function $v:2^\mathcal{A}\rightarrow \mathbb{R}$, that maps each coalition $C\in 2^\mathcal{A}$ to its value, describing how much collective payoff a set of players can gain by forming a coalition. 
A coalition structure $CS$ is a partition of the set of agents into disjoint coalitions. The set of all coalition structures is $\Pi(\mathcal{A})$. The value of a coalition structure $CS$ is assessed as the sum of the values of its composing coalitions, i.e.,
\begin{equation}\label{eq:V}
V(CS)=\sum_{C\in CS}v(C).
\end{equation}
CSG aims at identifying $CS^*$, the most valuable coalition structure, i.e., $CS^*=\argmax_{CS\in\Pi(\mathcal{A})}{V(CS)}.$
Graphs have been used in different scenarios to encode synergies, coordination among players, possible collaborations or cooperation structures \cite{Myerson1977,Voice2012a,meir2012optimization}. \citeN{Myerson1977} and \citeN{Demange2004} pioneered the study of graphs to model cooperation structures. Given an undirected graph $G=(\mathcal{A},\mathcal{E})$, where $\mathcal{E} \subseteq \mathcal{A} \times \mathcal{A}$ is a set of edges between agents, representing the relationships between them, Myerson considers a coalition $C$ to be feasible if all of their members are connected in the subgraph of $G$ induced by $C$. That is, for each pair of players from $a,b \in C$ there is a path in $G$ that connects them without going out of $C$. Thus, given a graph $G$ the set of feasible coalitions is
$$\mathcal{FC}(G)=\{C\subseteq \mathcal{A} \mid \text{The subgraph induced by } C \text{ on } G \text{ is connected}\}.$$
A Graph-Constrained Coalition Formation (GCCF) problem is a CSG problem together with a graph $G$, in which a coalition $C$ is considered feasible if $C\in\mathcal{FC}(G)$. Moreover, a coalition structure $CS$ is considered feasible if each of its coalitions is feasible, i.e.,
$$\mathcal{CS}(G)=\{CS\in \Pi(\mathcal{A}) \mid CS\subseteq \mathcal{FC}(G)\}.$$
A GCCF problem aims at identifying the most valuable coalition structure, defined as
$CS^*=\argmax_{CS\in \mathcal{CS}(G)}{V(CS)}.$

In the next section, we propose a novel representation of the GCCF problem based on the concept of edge contraction.
 
\section{A general algorithm for GCCF}
\label{sec:search}
We now present a general algorithm to solve GCCF by showing that all feasible coalition structures induced by $G$ can be modelled as the nodes of a search tree in which each feasible coalition structure is represented only once. Specifically, we first detail how we use edge contractions to represent the GCCF problem and then we provide a depth-first approach to build and traverse the search tree to find the optimal solution.

\subsection{Generating feasible coalition structures via edge contractions}

In this section we show that each $CS\in\mathcal{CS}(G)$ can be represented by a corresponding graph $G_{CS}= (\mathcal{V},\mathcal{F})$, where $\mathcal{V}\subseteq 2^\mathcal{A}$ and $\mathcal{F}\subseteq \mathcal{V}\times\mathcal{V}$, i.e., each node $u\in \mathcal{V}$ represents a particular coalition. 
Notice that in the initial graph $G=(\mathcal{A},\mathcal{E})$ each vertex $u\in\mathcal{A}$ represents a single agent, and hence, $G$ can be seen as the representation of the feasible coalition structure formed by all the singletons.

\noindent
In what follows, we will show that, for each $CS\in\mathcal{CS}(G)$, the corresponding $G_{CS}$ can be obtained as the contraction of a set of edges of $G$, and that each contraction of a set of edges of $G$ represents a feasible coalition structure $CS\in\mathcal{CS}(G)$. 
In more detail, let us define an \emph{edge contraction} as follows. 

\begin{mydef}Given a graph $G = (\mathcal{V},\mathcal{F})$, where $\mathcal{V}\subseteq 2^\mathcal{A}$ and $\mathcal{F}\subseteq \mathcal{V}\times\mathcal{V}$, and an edge $e=(u,v)\in\mathcal{F}$, the result of the contraction of $e$ is a graph $G'$ obtained by removing $e$ and the corresponding vertices $u$ and $v$, and adding a new vertex $w=u\cup v$. Moreover, each edge incident to either $u$ or $v$ in $G$ will become incident to $w$ in $G'$, merging the parallel edges (i.e., the edges that are incident to the same two vertices) that may result. \end{mydef}

\begin{figure}
\centering
\includegraphics{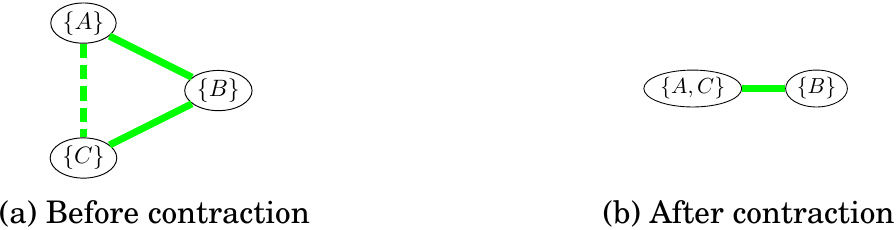}
\caption{Example of an edge contraction (the dashed edge is contracted).}
\label{fig:trianglecontraction}
\end{figure}

\noindent Intuitively, one edge contraction represents the merging of the coalitions associated to the incident vertices. Figure \ref{fig:trianglecontraction} shows the contraction of the edge $\left(\left\{A\right\},\left\{C\right\}\right)$, which results in a new vertex $\left\{A,C\right\}$ connected to vertex $\left\{B\right\}$. Notice that edge contraction is a commutative operation (i.e., first contracting $e$ and then $e'$ results in the same graph as first contracting $e'$ and then  $e$). Hence, we can define the contraction of a set of edges as the result of contracting each of the edges of the set in any given order. 
\begin{remark}
Given a graph $G$, the graph $G'$ resulting from the contraction of any set of edges of $G$ represents a feasible coalition structure, where coalitions correspond to the vertices of $G'$.
\end{remark}

\begin{figure}[t]
\centering
\includegraphics{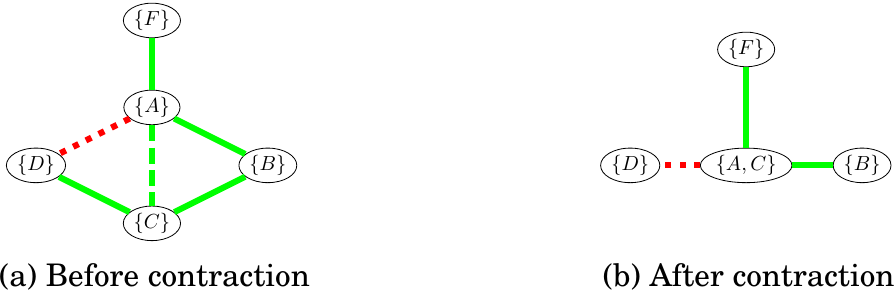}
\caption{Example of a 2-coloured edge contraction (the dashed edge is contracted).}
\label{Fig:ContractionExample}
\end{figure}

\begin{remark}
Given a graph $G$, any feasible coalition structure $CS$ can be generated by contracting a set of edges of $G$.
\end{remark}
\noindent
Thus, a possible way of listing all feasible coalition structures is to list the contraction of every subset of edges of the initial graph. However, notice that the number of subsets of edges is larger than the number of feasible coalition structures over the graph. For example, in the triangle graph in Figure~\ref{fig:trianglecontraction}a, the number of subsets of edges is $2^{|\mathcal{E}|}=2^3=8$, but the number of feasible coalition structures is $5$ (i.e., $\left\{A\right\}\left\{B\right\}\left\{C\right\}$, $\left\{A,B\right\}\left\{C\right\}$, $\left\{A,C\right\}\left\{B\right\}$, $\left\{A\right\}\left\{B,C\right\}$ and $\left\{A,B,C\right\}$). This redundancy is due to the fact that the contraction of any two or three edges leads to the same coalition structure, i.e., the grand coalition $\mathcal{A}=\left\{A,B,C\right\}$. Thus, we need a way to avoid listing feasible coalition structures more than once.
To avoid such redundancies, we mark each edge of the graph to keep track of the edges that have been contracted so far. Notice that there are only two different alternative actions for each edge: either we contract it, or we do not. If we decide to contract an edge, it will be removed from the graph in all the subtree rooted in the current node, but if we decide not to contract it, we have to mark such edge to make sure that we do not contract it in the future steps of the algorithm. To represent such marking, we will use the notion of \emph{2-coloured graph}.

\begin{mydef}A 2-coloured graph $G_c = (\mathcal{V},\mathcal{F},c)$ is composed of a set of vertices $\mathcal{V}\subseteq 2^\mathcal{A}$ and a set of edges $\mathcal{F}\subseteq\mathcal{V}\times\mathcal{V}$, as well as a function $c:\mathcal{F}\rightarrow \{red, green\}$ that assigns a colour ($red$ or $green$) to each edge of the graph.  
\end{mydef}

\noindent
In our case, a red edge means that a previous decision not to contract that edge was made. On the one hand, green edges can be still contracted. Figure~\ref{Fig:ContractionExample}a shows an example of a 2-colour graph in which edge $\left(\left\{A\right\},\left\{D\right\}\right)$ is coloured in red (dotted line). Hence, in any subsequent step of the algorithm it is impossible to contract it. On the other hand, all other edges in such graph can still be contracted.
In a 2-coloured graph, we define a \emph{green edge contraction} (e.g., dashed line in Figure~\ref{Fig:ContractionExample}a) as follows.

\begin{mydef}\label{def:gec}Given a 2-coloured graph $G = (\mathcal{V},\mathcal{F},c)$ and a green edge $e\in\mathcal{F}$, the result of the contraction of $e$ is a new graph $G'$ obtained by performing the contraction of $e$ on $G$. Whenever two parallel edges are merged into a single one, the resulting edge is coloured in $red$ if at least one of them is red-coloured, and it is green-coloured otherwise. 
\end{mydef} 

\noindent
The rationale behind marking parallel edges in this way is that, whenever we mark an edge $e=(u,v)$ to be $red$, we want the agents in $u$ and $v$ to be in separate coalitions, hence whenever we merge some edges with $e$ we must mark the new edge as $red$ to be sure that future edge contractions will not generate a coalition that contains both the agents corresponding to nodes $u$ and $v$. 
For example, note that in Figure~\ref{Fig:ContractionExample} the red edge $\left(\left\{A\right\},\left\{D\right\}\right)$ (dotted in the figure) and the green edge $\left(\left\{D\right\},\left\{C\right\}\right)$ are merged as a consequence of the contraction of edge $\left(\left\{A\right\},\left\{C\right\}\right)$, resulting in an edge $\left(\left\{D\right\},\left\{A,C\right\}\right)$ marked in red. In this way, we enforce that any possible contraction in the new graph will keep agents $A$ and $D$ in separate coalitions. 

Having defined how we can use the edge contraction operation to generate feasible coalition structures, we now provide a way to generate the whole search space of feasible coalition structures.

\subsection{Generating the entire search space}

Given the green edge contraction operation defined above, we can generate each feasible coalition structure only once. In more detail, at each point of the generation process, each red edge indicates that it has been discarded for contraction from that point onwards, and hence its vertices cannot be joined. Observe that the way we defined green edge contraction guarantees that the information in red edges is always preserved. Thus, given a 2-coloured graph, its children can be readily assessed as follows: for each edge in the graph, we generate the graph that results from contracting that edge. Moreover, we colour the selected edge in red so that it cannot be contracted again in subsequent edge contractions. 
Algorithm~\ref{alg:VisitAllCoalitionStructures} implements the depth-first\footnote{The DFS strategy allows us to traverse the entire tree with polynomial memory requirements, since at each stage of the search we only need to store the ancestors of the current node.} generation and traversal of our search tree, in which each feasible coalition structure is evaluated by means of the characteristic function and compared with the best (i.e., the one with the highest value) coalition structure so far, hence computing the optimal solution.

\begin{algorithm}[t]\caption{\textsc{SolveGCCF}$\left(G_c\right)$}\label{alg:VisitAllCoalitionStructures}
\begin{algorithmic}[1]
\State \footnotesize \hskip-2pt$best \leftarrow G_c,\, F \leftarrow \emptyset$ \Comment{Initialise solution with singletons and search frontier $F$ with empty stack}
\State $F.\textsc{push}(G_c)$ \Comment{Push $G_c$ as the first node to visit}
\While{$F\neq\emptyset$} \Comment{Search loop}
    \State $node \leftarrow F.\textsc{pop}()$ \Comment{Get current node}
        \If{$V\left(node\right) > V\left(best\right)$} \Comment{Check function value}
            \State $best \leftarrow node$ \Comment{Update current best solution}
        \EndIf
        \State $F.\textsc{push}(\textsc{Children}\left(node\right))$ \Comment{Update frontier $F$}
\EndWhile
\State \Return $best$ \Comment{Return optimal solution}
\end{algorithmic}
\end{algorithm}
\begin{algorithm}[t]\caption{\textsc{Children}$\left(G_c\right)$}\label{alg:children}
\begin{algorithmic}[1]
\State \footnotesize\hskip-2pt$G' \gets G_c,\, Ch \gets \emptyset$ \Comment{Initialise graph $G'$ with $G_c$ and empty set of children}
\ForAll{$e\in G_c:c\left(e\right)=green$} \Comment{For all green edges} \label{line:asd}
    \State $Ch\gets Ch \cup \left\{\textsc{GreenEdgeContraction}\left(G',e\right)\right\}$
    \State Mark edge $e$ with colour $red$ in $G'$
\EndFor
\State \Return $Ch$ \Comment{Return the set of children}
\end{algorithmic}
\end{algorithm}

\begin{figure}
\centering
\resizebox{0.8\columnwidth}{!}{

\newcommand{\squareRoot}{
\begin{tikzpicture}[scale=0.6]
	\node[vertex] (a) at (0,0) {$\{A\}$};
	\node[vertex] (b) at (2,1) {$\{B\}$};
	\node[vertex] (c) at (4,0) {$\{C\}$};
	\node[vertex] (d) at (2,-1) {$\{D\}$};
	\path[greenedge] (a) -- (b);
	\path[greenedge] (b) -- (c);
	\path[greenedge] (a) -- (d);
	\path[greenedge] (d) -- (c);
\end{tikzpicture}
}

\newcommand{\squareA}{
\begin{tikzpicture}[scale=0.6]
	\node[vertex] (ab) at (1.5,0.5) {$\{A,B\}$};
	\node[vertex] (c) at (4,0) {$\{C\}$};
	\node[vertex] (d) at (2,-1) {$\{D\}$};
	\path[greenedge] (ab) -- (c);
	\path[greenedge] (ab) -- (d);
	\path[greenedge] (d) -- (c);
\end{tikzpicture}
}

\newcommand{\squareAA}{
\begin{tikzpicture}[scale=0.6]
	\node[vertex] (abc) at (2,0.5) {$\{A,B,C\}$};
	\node[vertex] (d) at (2,-1) {$\{D\}$};
	\path[greenedge] (abc) -- (d);
\end{tikzpicture}
}

\newcommand{\squareAAA}{
\begin{tikzpicture}[scale=0.6]
	\node[vertex] (abc) at (0,0) {$\{A,B,C,D\}$};
\end{tikzpicture}
}

\newcommand{\squareAB}{
\begin{tikzpicture}[scale=0.6]
	\node[vertex] (ab) at (2,0.5) {$\{A,B\}$};
	\node[vertex] (cd) at (2,-1) {$\{C,D\}$};
	\path[rededge] (ab) -- (cd);
\end{tikzpicture}
}

\newcommand{\squareAC}{
\begin{tikzpicture}[scale=0.6]
	\node[vertex] (abc) at (2,0.5) {$\{A,B,D\}$};
	\node[vertex] (d) at (2,-1) {$\{C\}$};
	\path[rededge] (abc) -- (d);
\end{tikzpicture}
}

\newcommand{\squareB}{
\begin{tikzpicture}[scale=0.6]
	\node[vertex] (a) at (0,0) {$\{A\}$};
	\node[vertex] (bc) at (2.5,0.5) {$\{B,C\}$};
	\node[vertex] (d) at (1.5,-1) {$\{D\}$};
	\path[rededge] (a) -- (bc);
	\path[greenedge] (a) -- (d);
	\path[greenedge] (bc) -- (d);
\end{tikzpicture}
}

\newcommand{\squareBA}{
\begin{tikzpicture}[scale=0.6]
	\node[vertex] (ad) at (0,0) {$\{A,D\}$};
	\node[vertex] (bc) at (0,1.5) {$\{B,C\}$};
	\path[rededge] (ad) -- (bc);
\end{tikzpicture}
}

\newcommand{\squareBB}{
\begin{tikzpicture}[scale=0.6]
	\node[vertex] (a) at (0,0) {$\{A\}$};
	\node[vertex] (bcd) at (0,1.5) {$\{B,C,D\}$};
	\path[rededge] (a) -- (bcd);
\end{tikzpicture}
}

\newcommand{\squareC}{
\begin{tikzpicture}[scale=0.6]
	\node[vertex] (a) at (0,0) {$\{A\}$};
	\node[vertex] (b) at (2,1) {$\{B\}$};
	\node[vertex] (cd) at (2.5,-0.5) {$\{C,D\}$};
	\path[rededge] (a) -- (b);
	\path[rededge] (b) -- (cd);
	\path[greenedge] (a) -- (cd);
\end{tikzpicture}
}

\newcommand{\squareCA}{
\begin{tikzpicture}[scale=0.6]
	\node[vertex] (b) at (0,0) {$\{B\}$};
	\node[vertex] (acd) at (0,1.5) {$\{A,C,D\}$};
	\path[rededge] (b) -- (acd);
\end{tikzpicture}
}

\newcommand{\squareD}{
\begin{tikzpicture}[scale=0.6]
	\node[vertex] (b) at (2,1) {$\{B\}$};
	\node[vertex] (c) at (4,0) {$\{C\}$};
	\node[vertex] (ad) at (1.5,-0.5) {$\{A,D\}$};
	\path[rededge] (ad) -- (b);
	\path[rededge] (b) -- (c);
	\path[rededge] (ad) -- (c);
\end{tikzpicture}
}

\begin{tikzpicture}[level 1/.style={sibling distance=6cm,level distance=2.8cm},
level 2/.style={sibling distance=4cm,level distance=2.8cm},
level/.style={sibling distance = 5cm,,level distance = 2.3cm}] 
	\node [treenode]{\squareRoot}
	child{ node[treenode] {\squareA}
		child{ node[treenode] {\squareAA}
			child{ node[treenode] {\squareAAA} edge from parent node[left] {} }
			edge from parent node[left=0.2cm] {}
		}
		child{ node[treenode]{\squareAB} edge from parent node[left] {} }
		child{ node[treenode]{\squareAC} edge from parent node[left=0.1cm] {} }
		edge from parent node[left=0.5cm] {}
	}
	child{ node[treenode]  {\squareD} edge from parent node[left=0.1cm] {} }
	child{ node[treenode]  {\squareB} 
		child{ node[treenode]  {\squareBA} edge from parent node[left] {} }
		child{ node[treenode]  {\squareBB} edge from parent node[left] {} }
		edge from parent node[left=0.1cm] {} 
	}
	child{ node[treenode]  {\squareC}
		child{ node[treenode]  {\squareCA} edge from parent node[left,draw=none] {} }
		edge from parent node[right=0.6cm,draw=none] {}
	}
; 
\end{tikzpicture}
 }
\caption{Search tree for a square graph.}
\label{Fig:SquareTree}
\end{figure}
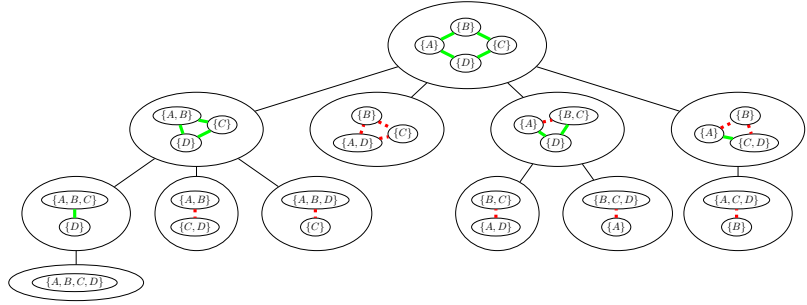

\noindent
As an example, Figure~\ref{Fig:SquareTree} shows the search tree generated starting from a square graph, highlighting each generation step with labels on the edges.
We now prove that Algorithm~\ref{alg:VisitAllCoalitionStructures} visits all feasible coalition structures and each of them is visited only once. 

\begin{proposition}
\label{prop:AllCompatible}
Given $G_c$, the tree generated by Algorithm~\ref{alg:VisitAllCoalitionStructures} rooted at $G_c$ contains all the coalition structures compatible with $G_c$, each appearing only once.
\end{proposition}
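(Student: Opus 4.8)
The plan is to prove the two halves of the statement separately — \emph{completeness} (every coalition structure compatible with $G_c$ labels some node of the tree) and \emph{non-redundancy} (no two distinct tree nodes carry the same coalition structure) — both by induction on the number of green edges of $G_c$, exploiting the invariant already observed in the text: a green edge contraction never destroys the information carried by red edges, i.e., if $(u,v)$ is red in a 2-coloured graph then in every descendant graph the agents originally in $u$ and the agents originally in $v$ lie in different blocks.

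For completeness, fix a coalition structure $CS$ compatible with $G_c$, meaning that $G_{CS}$ is obtained from $G_c$ by contracting some set $S$ of its green edges (so $CS$ coarsens the vertex partition of $G_c$, each block of $CS$ is spanned by green edges, and $CS$ keeps the endpoints of every red edge of $G_c$ in distinct blocks). If $S=\emptyset$, then $CS$ is precisely the coalition structure sitting at the root node $G_c$. Otherwise, enumerate the green edges of $G_c$ in the order $e_1,\dots,e_k$ used by \textsc{Children} (Algorithm~\ref{alg:children}) and let $e_i$ be the first one whose two endpoints lie in the same block of $CS$ (such an edge exists, since any edge of $S$ qualifies). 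By construction the $i$-th child of $G_c$ is the graph $G'$ obtained by contracting $e_i$ after colouring $e_1,\dots,e_{i-1}$ red. I claim $CS$ is compatible with $G'$: it merges the endpoints of $e_i$ (consistent with the contraction), it separates the endpoints of $e_1,\dots,e_{i-1}$ by minimality of $i$ (consistent with those edges now being red), and by the red-preservation invariant it still separates the endpoints of every red edge already present in $G_c$. Moreover the green edges still needed to reach $CS$ from $G'$ remain green: the only way such an edge could fail to stay green is by being merged (per Definition~\ref{def:gec}) with a red edge, but that would force two blocks joined by a red edge to coincide in $CS$, contradicting compatibility. Since $G'$ has strictly fewer green edges than $G_c$ (at least $e_i$ is gone), the induction hypothesis puts $CS$ in the subtree rooted at $G'$, hence in the tree rooted at $G_c$; the base case (no green edges) is the single-node tree carrying the unique compatible coalition structure.

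For non-redundancy, it suffices to show that distinct tree nodes carry distinct coalition structures. Along any root-to-leaf path each step is a green edge contraction, which strictly coarsens the vertex partition; hence the coalition structures along a path are pairwise distinct, so an ancestor and a descendant never coincide. If two nodes $N_1,N_2$ are incomparable, let $N$ be their last common ancestor: the path to $N_1$ leaves $N$ through its $i$-th child (which contracts $e_i$) and the path to $N_2$ through its $j$-th child with, say, $j>i$ (which has $e_i$ coloured red, hence red in all its descendants). Every coalition structure in the subtree of the $i$-th child merges the endpoints of $e_i$, whereas every coalition structure in the subtree of the $j$-th child keeps them apart; the two subtrees therefore carry disjoint sets of coalition structures, so $CS(N_1)\neq CS(N_2)$.

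I expect the delicate point to be the compatibility claim inside the completeness step — arguing that after contracting $e_i$ the green edges still required to assemble $CS$ are not absorbed into red edges, and that the parallel-edge bookkeeping of Definition~\ref{def:gec} exactly mirrors the ``same block / different block'' status dictated by $CS$. The non-redundancy direction is comparatively routine once the red-preservation invariant is available.
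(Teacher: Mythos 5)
Your proof is correct, and it follows essentially the same strategy the paper indicates for this proposition: induction on the number of green edges, combined with the red-edge preservation property of green edge contraction (the paper's full argument is deferred to its Online Appendix, but its stated sketch is exactly this induction). Your handling of the two delicate points — that after contracting $e_i$ no within-block green edge can be absorbed into a red edge, and that incomparable nodes are separated at their last common ancestor by the red copy of $e_i$ — fills in the details consistently with that approach.
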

\begin{myproof}[Sketch of proof]
\renewcommand{\qed}{}By induction on the number of green edges.
Full proof is provided in the Online Appendix.
\end{myproof}
\begin{proposition}
\label{Cor:OneToOne}
The complexity of Algorithm~\ref{alg:VisitAllCoalitionStructures} is $O(\left\vert\mathcal{CS}(G)\right\vert\cdot\left\vert\mathcal{E}\right\vert)$.
\end{proposition}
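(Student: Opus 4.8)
The plan is to combine the structural guarantee of Proposition~\ref{prop:AllCompatible} with a uniform $O(|\mathcal{E}|)$ bound on the work done at each node of the search tree. First I would invoke Proposition~\ref{prop:AllCompatible} with $G_c$ equal to the initial graph $G$ (all edges green, every vertex a singleton): the tree explored by Algorithm~\ref{alg:VisitAllCoalitionStructures} then contains every $CS\in\mathcal{CS}(G)$ exactly once, so it has precisely $|\mathcal{CS}(G)|$ nodes. Since Algorithm~\ref{alg:VisitAllCoalitionStructures} performs a depth-first traversal with an explicit stack $F$, each node is pushed once and popped once; hence its total running time is, up to a constant, $|\mathcal{CS}(G)|$ multiplied by the maximum cost of processing a single node (popping it, evaluating $V$, comparing with $best$, and generating and pushing its children via Algorithm~\ref{alg:children}).

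Next I would bound the per-node cost by $O(|\mathcal{E}|)$. The key invariant is that the number of edges never grows along the tree: a green edge contraction removes the contracted edge and can only merge (never create) parallel edges, so every $2$-coloured graph appearing in the tree has at most $|\mathcal{E}|$ edges, and in particular at most $|\mathcal{E}|$ green ones. Consequently the loop of Algorithm~\ref{alg:children} iterates at most $|\mathcal{E}|$ times, so a node has at most $|\mathcal{E}|$ children and pushing them onto $F$ costs $O(|\mathcal{E}|)$. It remains to account for (i) each green edge contraction together with the $red$-marking bookkeeping, and (ii) the evaluation of $V(node)=\sum_{C\in node}v(C)$ and the comparison with $V(best)$. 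Under an adjacency-list representation a single contraction touches only the edges incident to the two merged endpoints, hence is $O(|\mathcal{E}|)$; and, as is standard in this literature, each coalitional value $v(C)$ is obtained in $O(1)$ (either read from memory or computed online for $\mplusa$ functions, cf.\ Section~\ref{sec:functions}), so $V$ is evaluated with at most $|node|\le n=O(|\mathcal{E}|)$ oracle calls — indeed $V(node)$ can be maintained incrementally in $O(1)$ per contraction via $V(node)=V(\text{parent})-v(u)-v(v)+v(u\cup v)$. Summing, the work at each node is $O(|\mathcal{E}|)$, and multiplying by the $|\mathcal{CS}(G)|$ nodes gives the claimed $O(|\mathcal{CS}(G)|\cdot|\mathcal{E}|)$ bound.

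I expect the only delicate point to be items (i) and (ii) above, i.e.\ fixing a concrete cost model under which a green edge contraction plus a characteristic-function evaluation is charged as $O(|\mathcal{E}|)$ rather than something larger; everything else is a direct consequence of Proposition~\ref{prop:AllCompatible} and the monotone-in-$|\mathcal{E}|$ behaviour of edge contraction. A minor secondary subtlety is that on a disconnected $G$ one may have $n>|\mathcal{E}|$; this is harmless, since the algorithm is run per connected component (and $|\mathcal{CS}(G)|$ factorises accordingly), and on each component $n-1\le|\mathcal{E}|$.
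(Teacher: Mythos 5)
Your argument is correct and follows essentially the same route as the paper: Proposition~\ref{prop:AllCompatible} applied to $G$ with all edges green yields exactly $\left\vert\mathcal{CS}(G)\right\vert$ tree nodes, and each node's creation costs one \textsc{GreenEdgeContraction} of complexity $O(\left\vert\mathcal{E}\right\vert)$. The only wording to tighten is the per-node accounting: a single node can have up to $\left\vert\mathcal{E}\right\vert$ children, each requiring an $O(\left\vert\mathcal{E}\right\vert)$ contraction, so charge each contraction to the child it creates (exactly one per node, as the paper does) rather than bounding the whole \textsc{Children} call at the parent by $O(\left\vert\mathcal{E}\right\vert)$; this amortisation is what actually delivers $O(\left\vert\mathcal{CS}(G)\right\vert\cdot\left\vert\mathcal{E}\right\vert)$.
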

\begin{proof}
There is a bijection between $\mathcal{CS}(G)$ and the nodes visited by Algorithm~\ref{alg:VisitAllCoalitionStructures}, by direct application of Proposition~\ref{prop:AllCompatible} to $G$ with all green edges. The creation of each new node  yields a \textsc{GreenEdgeContraction}$(G,e)$ operation, whose complexity is $O(\left\vert\mathcal{E}\right\vert)$ (Definition~\ref{def:gec}). Hence, the complexity of creating the search tree is $O(\left\vert\mathcal{CS}(G)\right\vert\cdot\left\vert\mathcal{E}\right\vert)$.\footnote{Notice that, since Coalition Structure Generation (CSG) is a particular case of GCCF (i.e., CSG is a GCCF problem with a complete graph), $\left\vert\mathcal{CS}(G)\right\vert$ can be, in the worst case, equivalent to the $n$\textsuperscript{th} Bell number, i.e., $\Omega((\frac{n}{\ln(n)})^n)$~\cite{berend2010improved}, where $n$ is the number of agents. Nonetheless, in the problems we consider $G$ is sparse and, hence, $\mathcal{CS}(G)$ contains a lower number of feasible coalition structures.}
\end{proof}
\noindent
Notice that, even for sparse graphs, the number of feasible coalition structures can be very large, as, in general, the GCCF problem is NP-complete~\cite{Voice2012b}. Hence, in the next section we propose a branch and bound technique that helps prune significant parts of the search space, allowing us to compute the optimal solution for any GCCF problem based on an $\mplusa$ function by generating only a minimal portion of the solution space (i.e., less than $0.32\%$ in our experiments in Section~\ref{exp:bound}).

In addition, such a bounding technique is employed in the approximate version of our approach, which can compute solutions with quality guarantees for large-scale systems. It is important to note that, in contrast with the optimal version, our approximate approach is not characterised by the above discussed exponential complexity, as the search for the solution is executed only for a given time budget (see Section~\ref{subsec:anytimeprop}).

\section{A general branch and bound algorithm for \texorpdfstring{{\large $\MakeLowercase{\mplusa}$}}{m+a} functions}
\label{sec:cfss}

We now describe CFSS (Coalition Formation for Sparse Synergies), our branch and bound approach to GCCF when applied to the family of $\mplusa$ characteristic functions. 

\begin{mydef}\label{def:supersub}
Given a graph $G$, a function $v:\mathcal{FC}(G)\to\mathbb{R}$ is superadditive (resp. subadditive) if the value of the union of disjoint coalitions is no less (resp. no greater) than the sum of the coalitions' separate values, i.e., $v ( S \cup T ) \geq (\text{resp.} \leq)\, v (S) + v (T)$ for all $S,T\subseteq\mathcal{A}$ such that $S\cap T=\emptyset$.
\end{mydef}

\noindent
We also define such properties for the function $V:\mathcal{CS}(G)\to\mathbb{R}$ defined in Equation~\ref{eq:V}.

\begin{mydef}\label{def:Vsupersub}
Given a graph $G$, a function $V:\mathcal{CS}(G)\to\mathbb{R}$ defined according to Equation~\ref{eq:V} is superadditive (resp. subadditive) if the underlying function $v:\mathcal{FC}(G)\to\mathbb{R}$ is superadditive (resp. subadditive).
\end{mydef}

\begin{mydef}\label{def:ma}
Given a graph $G$, $V$$\colon$$\mathcal{CS}(G)$$\to$$\mathbb{R}$ is an $\mplusa$ function if it is the sum of a superadditive (i.e., monotonic increasing) function $V^+$$\colon$$\mathcal{CS}(G)$$\to$$\mathbb{R}$ and a subadditive (i.e., antimonotonic) function $V^-$$\colon$$\mathcal{CS}(G)$$\to$$\mathbb{R}$.
\end{mydef}

\noindent
This family is interesting because it allows us to provide an upper bound that underlies our branch and bound strategy, so as to prune significant portions of the search space and have a computationally affordable solution algorithm.
We provide a technique to compute an upper bound for the value assumed by the characteristic function in every coalition structure of the subtree $ST(CS_i)$ rooted at a given coalition structure $CS_i$. In order to explain how to compute such an upper bound, we first define the element $\overline{CS_i}$.

\begin{mydef}\label{def:cshat}
Given a feasible coalition structure $CS_i$ represented by a 2-coloured graph $G_c$, we define $\overline{CS_i}$ as the coalition structure obtained by removing all red edges from $G_c$ and then contracting all the remaining green edges. Intuitively, $\overline{CS_i}$ represents the connected components in the graph after the removal of all red edges.
\end{mydef}

\begin{theorem}\label{prop:4}
Given an $\mplusa$ function $V:\mathcal{CS}(G)\to\mathbb{R}$, then $M\left(CS_i\right) = V^-\left(CS_i\right) + V^+\left(\overline{CS_i}\right)$ is an upper bound 
for the value assumed by such function in every coalition structure of the subtree $ST(CS_i)$ rooted at $CS_i$, i.e.,
\begin{equation}\label{eq:bound1}M\left(CS_i\right)= V^-\left(CS_i\right) + V^+\left(\overline{CS_i}\right)\geq \max \{V(CS_j) \mid CS_j\in ST(CS_i)\}.\end{equation}
\end{theorem}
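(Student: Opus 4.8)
<br>

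The plan is to write $V(CS_j)=V^{+}(CS_j)+V^{-}(CS_j)$ and to bound the two summands separately, exploiting the fact that, viewing coalition structures as partitions of $\mathcal{A}$, every $CS_j\in ST(CS_i)$ is simultaneously a \emph{coarsening} of $CS_i$ and a \emph{refinement} of $\overline{CS_i}$. I would establish these two structural facts first. The coarsening claim is immediate: by \textsc{Children} (Algorithm~\ref{alg:children}) every descendant of $CS_i$ is reached from $CS_i$ by a sequence of green edge contractions, and each such contraction only merges the coalitions sitting on the two endpoints of the contracted edge; hence each coalition of $CS_j$ is a union of coalitions of $CS_i$. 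For the refinement claim I would show that the separations encoded by red edges are preserved along any root-to-node path: by Definition~\ref{def:gec}, when parallel edges are merged the result stays red if any constituent is red, so two vertices of $G_c$ that are joined only through red edges are never placed in a common coalition of any $CS_j\in ST(CS_i)$. Therefore every coalition of $CS_j$ is contained in a single green-connected component of $G_c$, i.e.\ in a single coalition of $\overline{CS_i}$ (Definition~\ref{def:cshat}).

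Granted these facts, the bounds follow by telescoping the inequalities of Definition~\ref{def:supersub} applied to the underlying coalitional functions $v^{+},v^{-}$ of $V^{+},V^{-}$. Since $CS_i$ refines $CS_j$, every coalition of $CS_j$ is a disjoint union $C_1\cup\dots\cup C_k$ of coalitions of $CS_i$, so iterated subadditivity of $v^{-}$ gives $v^{-}(C_1\cup\dots\cup C_k)\le\sum_{t}v^{-}(C_t)$; summing over the coalitions of $CS_j$ yields $V^{-}(CS_j)\le V^{-}(CS_i)$. Symmetrically, since $CS_j$ refines $\overline{CS_i}$, every coalition of $\overline{CS_i}$ is a disjoint union of coalitions of $CS_j$, and iterated superadditivity of $v^{+}$ yields $V^{+}(CS_j)\le V^{+}(\overline{CS_i})$. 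Adding the two inequalities,
\[
V(CS_j)=V^{+}(CS_j)+V^{-}(CS_j)\le V^{+}\!\left(\overline{CS_i}\right)+V^{-}(CS_i)=M(CS_i),
\]
and since $CS_j\in ST(CS_i)$ was arbitrary this is exactly Equation~\ref{eq:bound1}. Note that $\overline{CS_i}\in\mathcal{CS}(G)$, so $V^{+}(\overline{CS_i})$ is well defined even though $\overline{CS_i}$ itself need not be a node of $ST(CS_i)$.

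The one delicate step is the refinement claim — equivalently, that no descendant of $CS_i$ ever forms a coalition straddling a red cut of $G_c$. This is where I would be most careful, proving by induction on the number of contractions that any green edge occurring in a descendant graph of $G_c$ arises by merging parallel edges at least one of which is an originally-green edge of $G_c$ between the two merged vertex-sets; contracting along it therefore keeps all the involved original vertices inside one green-connected component of $G_c$. Everything else is a routine telescoping of the super-/subadditivity inequalities and does not even use the monotonicity remarks in Definition~\ref{def:ma}.
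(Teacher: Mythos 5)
Your proof is correct and follows essentially the same route as the paper's argument: decompose $V(CS_j)=V^+(CS_j)+V^-(CS_j)$, bound the subadditive part by $V^-(CS_i)$ using that descendants coarsen $CS_i$, and bound the superadditive part by $V^+\left(\overline{CS_i}\right)$ using that red edges prevent any descendant coalition from crossing a red cut, so every $CS_j$ refines $\overline{CS_i}$. Your explicit treatment of the red-edge preservation and the telescoping of Definition~\ref{def:supersub} simply fills in the details the paper defers to its appendix.
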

\begin{myproof}[Sketch of proof]
$V^-\left(CS_i\right)$ (resp. $V^+\left(\overline{CS_i}\right)$) is an upper bound for the subadditive (resp. superadditive) component, hence $M\left(CS_i\right)$ is an upper bound for the characteristic function. Full proof is provided in the Online Appendix.
\end{myproof}

\begin{remark}\label{cor:edge}
Given $CS_i$ represented by a 2-coloured graph $G_c = (\mathcal{V}, \mathcal{F}, c)$, it is possible to compute a more precise upper bound for the edge sum with coordination cost function (see Section~\ref{sec:clustering}) by replacing $V^+\left(\overline{CS_i}\right)$ with $\sum_{{e\in \mathcal{F}:c(e)=green}}w^+(e).$
\end{remark}

\noindent
Building upon Theorem~\ref{prop:4}, we can efficiently assess an upper bound for the value of the characteristic function in any subtree and prune it, if such a value is smaller than the value of the best solution found so far. Algorithm~\ref{alg:branch-and-bound} implements CFSS, our branch and bound approach to solve the GCCF problem. 
\begin{algorithm}[b]\caption{\textsc{CFSS}$\left(G_c\right)$}\label{alg:branch-and-bound}
\begin{algorithmic}[1]
\State \footnotesize\hskip-2pt$best \leftarrow G_c,\, F \leftarrow \emptyset$ \Comment{Initialise solution with singletons and search frontier $F$ with empty stack}
\State $F.\textsc{push}(G_c)$ \Comment{Push $G_c$ as the first node to visit}
\While{$F\neq\emptyset$} \Comment{Branch and bound loop}
    \State $node \leftarrow F.\textsc{pop}()$ \Comment{Get current node}
    \If{$M(node) > V(best)$} \Comment{Check bound value}\label{line:bound}
        \If{$V(node) > V(best)$} $best \leftarrow node$ \Comment{Update current best solution}
        \EndIf
        \State $F.\textsc{push}(\textsc{Children}\left(node\right))$ \Comment{Update frontier $F$}
    \EndIf
\EndWhile
\State \Return $best$ \Comment{Return optimal solution}
\end{algorithmic}
\end{algorithm}

We remark that Algorithm~\ref{alg:branch-and-bound} is correct and complete, i.e., it computes the optimal solution regardless of the order in which the children of the current node are visited, namely the operation of the \textsc{Children} function. However, such an order has a strong influence on the performance of CFSS (as shown in Section~\ref{exp:order}), since it can be used to compute an upper bound that better resembles the characteristic function (hence improving the effectiveness of the branch and bound pruning).

\subsection{Edge ordering heuristic}\label{sec:edgeordering}

\noindent
In this section we propose a heuristic to define a total ordering among the edges of a graph $G$, in order to guide the traversal of the search tree. This results in a significant speed-up of the algorithm, by means of an improvement of the upper bound. In particular, we notice that the value of $M\left(CS_i\right) = V^-\left(CS_i\right) + V^+\left(\overline{CS_i}\right)$ is heavily influenced by the value of $V^+\left(\overline{CS_i}\right)$. In fact, it is possible that $\overline{CS_i}=\{\mathcal{A}\}$ (i.e., the grand coalition), when $CS_i$ contains enough green edges to connect all the nodes of the graph $G$. This results in a poor bound, since $V^+$ is a superadditive function and it reaches its maximum value for $\mathcal{A}$.

On the other hand, if red edges form a cut-set for the 2-coloured graph, the procedure in Definition~\ref{def:cshat} results in a coalition structure $\overline{CS_i}=\{C_1,C_2\}$, as Figure~\ref{fig:cut} shows.
In this case, our bounding technique produces a \emph{lower} upper bound $M\left(CS_i\right) = V^-\left(CS_i\right) + v^+\left(C_1\right) + v^+\left(C_2\right)$, since $v^+\left(\cdot\right)$ is superadditive and, therefore, $v^+\left(C_1\right) + v^+\left(C_2\right)\leq v^+\left(\mathcal{A}\right).$ Notice that, having an upper bound that provides a lower overestimation of the characteristic function is crucial for the performance of CFSS, as the condition at line~\ref{line:bound} in Algorithm~\ref{alg:branch-and-bound} would be verified less often, hence allowing us to prune bigger portions of the search space. Also, it easy to see that when the value of the characteristic function increases in a non-linear way with respect to the size of the coalitions (such as the functions we consider in this paper), the more $C_1$ and $C_2$ are closer to a \emph{bisection} of $\mathcal{A}$  (i.e., the more $|C_1|$ and $|C_2|$ are close to $\nicefrac{|\mathcal{A}|}{2}$), the more pronounced such improvement is.

\begin{narrowfig}{0.3\columnwidth}
\hspace*{-3mm}
\begin{tikzpicture}[>=latex,line join=bevel]
\node[minimum size=5.5mm,inner sep=0pt,outer sep=0,draw,shape=circle] (2) at (0.5,1) {$B$};
\node[minimum size=5.5mm,inner sep=0pt,outer sep=0,draw,shape=circle] (1) at (1.5,1) {$A$};
\node[minimum size=5.5mm,inner sep=0pt,outer sep=0,draw,shape=circle] (4) at (0,2) {$D$};
\node[minimum size=5.5mm,inner sep=0pt,outer sep=0,draw,shape=circle] (3) at (1,2) {$J$};
\node[minimum size=5.5mm,inner sep=0pt,outer sep=0,draw,shape=circle] (6) at (-0.5,1) {$F$};
\node[minimum size=5.5mm,inner sep=0pt,outer sep=0,draw,shape=circle] (5) at (-1.5,1) {$E$};
\node[minimum size=5.5mm,inner sep=0pt,outer sep=0,draw,shape=circle] (8) at (2.5,1) {$G$};
\node[minimum size=5.5mm,inner sep=0pt,outer sep=0,draw,shape=circle] (7) at (2,2) {$H$};
\node[minimum size=5.5mm,inner sep=0pt,outer sep=0,draw,shape=circle] (9) at (-1,2) {$I$};
\draw [dashed,thick,color=red] (2) to (1);
\draw [dashed,thick,color=red] (3) to (2);
\draw [thick,color=green] (7) to (3);
\draw [thick,color=green] (3) to (1);
\draw [dashed,thick,color=red] (4) to (1);
\draw [thick,color=green] (2) to (6);
\draw [thick,color=green] (5) to (6);
\draw [thick,color=green] (1) to (7);
\draw [thick,color=green] (6) to (4);
\draw [thick,color=green] (9) to (4);
\draw [thick,color=green] (1) to (8);
\fill[gray,opacity=0.3] \convexpath{9,4,2,5}{10.5pt};
\fill[gray,opacity=0.3] \convexpath{3,7,8,1}{10.5pt};
\draw [dashed] \convexpath{9,3,7,8,5}{12pt};
\node at ($(9)!0.5!(6)$) {$C_1$};
\node at ($(7)!0.5!(8)$) {$C_2$};
\node[rotate=63.43,yshift=7mm,xshift=4mm] at ($(5)!0.45!(9)$) {$\overline{CS_i}$};
\end{tikzpicture}
\caption{Example of a partition with a cut-set of 3 edges.}
\label{fig:cut}
\end{narrowfig}

Following this observation, it is preferable to visit the edges that produce a cut of the graph in the first steps of the algorithm, since they will result in the above-explained improvement once such edges are marked in red. Henceforth, we define a total ordering among the edges of $G$, producing an \emph{ordered} graph $G_o$ by means of Algorithm~\ref{alg:order}. Intuitively, such algorithm computes small\footnote{To traverse the minimum number of edges necessary to partition the graph, we need the \emph{smallest} cut-set. Unfortunately, such a problem (known as the Minimum Bisection problem) is a well known NP-complete problem~\cite{Garey:1990:CIG:574848}. However, our heuristic does not need an optimal solution, since if a suboptimal cut-set (i.e., bigger than the optimal one) is used, our algorithm will still partition the graph in a higher number of steps, resulting in a slightly smaller improvement. Therefore, we adopt an approximate algorithm implemented with the METIS graph partitioning library~\cite{Karypis:1998:FHQ:305219.305248}.} cut-sets by means of the routine \textsc{Cut}$\left(G\right)$, which outputs the subgraphs $G_1=(\mathcal{V}_1,\mathcal{F}_1)$ and $G_2=(\mathcal{V}_2,\mathcal{F}_2)$ resulting from the cut, and the cut-set $\mathcal{F}'$.
Once the cut-set has been found, we label its edges as the first ones in the ordered graph, recursively applying such procedure for all the subsequent subgraphs resulting at each partitioning, until every edge has been ordered.

\begin{algorithm}[t]\caption{\textsc{Order}$\left(G\right)$}\label{alg:order}
\begin{algorithmic}[1]
\State \footnotesize\hskip-2pt$i \gets 1,\,G_o \leftarrow G,\,Q \leftarrow \emptyset$ \Comment{Initialise edge counter, ordered graph, and empty queue}
\State $Q.\textsc{push}(G)$ \Comment{Push $G$ as the first graph to partition}
\While{$Q\neq\emptyset$} \Comment{Partitioning loop}
    \State $\langle G_1,G_2,\mathcal{F}' \rangle \leftarrow \textsc{Cut}\left(Q.\textsc{pop}()\right)$ \Comment{Partition current graph}
    \State Label in $G_o$ each edge $\in \mathcal{F}'$ from $i$ to $i+|\mathcal{F}'|-1$
    \State $i\gets i+|\mathcal{F}'|$ \Comment{Increase edge counter}
	\If {$|\mathcal{V}_1|>1$} \Comment{If the first subgraph has at least 2 nodes...}
	\State $Q.\textsc{push}(G_1)$ \Comment{... enqueue it}
	\EndIf
	\If {$|\mathcal{V}_2|>1$} \Comment{If the second subgraph has at least 2 nodes...}
	\State $Q.\textsc{push}(G_2)$ \Comment{... enqueue it}
	\EndIf
\EndWhile
\State \Return $G_o$ \Comment{Return ordered graph}
\end{algorithmic}
\end{algorithm}
\begin{remark}
In the worst-case, Algorithm~\ref{alg:order} makes $\left\vert\mathcal{E}\right\vert$ calls to \textsc{Cut}, whose complexity is $O(\left\vert\mathcal{E}\right\vert)$~\cite{Karypis:1998:FHQ:305219.305248}. Hence, its worst-case complexity is $O(\left\vert\mathcal{E}\right\vert^2)$.
\end{remark}

\noindent
In addition to this edge ordering heuristic, our bounding technique can be employed to provide anytime approximate solutions, as shown in the next section.

\subsection{Anytime approximate properties}
\label{subsec:anytimeprop}

\noindent Theorem~\ref{prop:4} can be directly applied to compute an overall bound of an $\mplusa$ function, with anytime properties. 
More precisely, let us consider frontier $F$ in Algorithm \ref{alg:branch-and-bound}. When we expand frontier $F$ (Line 9) we keep track of the highest value of $V(\cdot)$ in the visited nodes. Hence, given a frontier $F$, the bound $B(F)$ is defined as
\begin{equation}
\label{eq:bound}
B(F)=\max\{V\left(best\right),\max_{CS \in F} M(CS)\}
\end{equation}
Thus, $B(F)$ is the maximum between the values assumed by $V(\cdot)$ inside the frontier (i.e., $V\left(best\right)$) and an estimated upper bound outside of it (i.e., $\max_{CS \in F} M\left(CS\right)$). Notice that since each $M\left(CS\right)$ is an overestimation of the value of $V(\cdot)$ in the corresponding subtree, such a maximisation provides a valid upper bound for $V(\cdot)$ in the portion of search space not visited yet.
Furthermore, the quality of $B(F)$ can only be improved by expanding frontier $F$. More formally, if $F'$ is such an expansion, then 
\begin{equation}
\label{eq:inequality}
B\left(F\right) \geq B\left(F'\right) \geq \max \{V(CS)\mid CS \in \mathcal{CS}\left(G\right)\}.
\end{equation} 
\noindent This can be easily verified using the definition of $M(\cdot)$. In fact, each bound resulting from the children of a substituted node $u\in F$ must be less or equal to $M(u)$ and, hence, Inequality~\ref{eq:inequality} holds. 
Intuitively, the larger the search space explored, the better is the bound provided.
Finally, notice that the fastest way to compute a bound for $V(\cdot)$ is to consider a frontier formed exclusively by the root (i.e., the coalition structure formed by all singletons). Assessing this bound has the same time complexity of computing $M$, i.e., $O(|\mathcal{E}|)$, and its quality can be satisfactory, as shown in Section \ref{sec:anytime}.

After the discussion of our branch and bound algorithm for $\mplusa$ functions, in the next section we discuss some scenarios in which GCCF can be applied, and, in particular, we present three $\mplusa$ functions that will be used to evaluate our approach.
 
\section{Applications for GCCF}\label{sec:6}

As previously discussed, GCCF is a well known model in cooperative game theory that can be applied to several realistic scenarios. In what follows, we focus on two real-world scenarios, namely social ridesharing and collective energy purchasing, that can be modelled as GCCF problems.

In the ridesharing domain, \citeN{ma2013t} adopted an heuristic approach in order to increase the potential passenger coverage of a fleet of taxis, while decreasing the total travel mileage of the system. Later on, \citeN{aaai} tackled the optimisation problem of arranging one-time shared rides among a set of commuters connected through a social network, with the objective of minimising the overall travel cost. Unlike \citeN{ma2013t}, \citeN{aaai} explicitly consider coalitions, showing that such a scenario can be modelled as a GCCF problem where the set of feasible coalitions is restricted by the social network. Intuitively, each group of agents that travel in the same car is mapped to a feasible coalition, whose coalitional value is defined as the total travel cost of that particular car, i.e., the cost of driving through its passengers' pick-up and destination points. \citeN{aaai} show that the adoption of the GCCF model in this scenario leads to a cost reduction of up to $-36.22\%$ when applied to realistic datasets for both spatial and social data.

In the \emph{collective energy purchasing} scenario~\cite{vinyals-ENERGYCON-12}, each agent is characterised by an energy consumption profile that represents its energy consumption throughout a day. A profile records the energy consumption of a household at fixed intervals (every half hour in our case). The characteristic function of a coalition of agents is the total cost that the group would incur if they bought energy as a collective in two different markets: the spot market, a short term market (e.g., half hourly, hourly) intended for small amounts of energy; and the forward market, a long term one in which larger amounts of energy (spanning weeks and months) can be bought at cheaper prices~\cite{vinyals-ENERGYCON-12}.
In the \emph{edge sum with coordination cost} scenario, every edge is associated to a value that represents how well (or bad) those agents perform together, or the cost of completing a coordination task in a robotic environment~\cite{Dasgupta:2012:DRM:2343576.2343593}.
In the \emph{coalition size with distance cost} scenario, the formation of coalitions favours bigger groups and maximises the similarity of the opinion among their members. Such application could be employed to cluster public opinion, or to detect the presence of ``virtual coalitions'' among members of a parliament based on their recorded votes (e.g., the votes by the Democratic and the Republican parties).

In addition to such practical motivations, these three scenarios are particularly interesting as they are modelled by characteristic functions (Equation~\ref{eq:energym+a}) part of a large family of functions, i.e., $\mplusa$ functions. In what follows, we discuss the properties of such functions, showing how they can be exploited to significantly speed-up the solution of the associated GCCF problem (see Section~\ref{sec:cfss}).

\subsection{Benchmark \texorpdfstring{{\large $\mplusa$}}{m+a} functions}\label{sec:functions}

We now present three benchmark functions for GCCF, namely the \emph{collective energy purchasing} function, the \emph{edge sum with coordination cost} function and the \emph{coalition size with distance cost} function. 
In particular, we are interested in their characterisation as $\mplusa$ functions, showing that they can be seen as the sums of the superadditive and the subadditive parts~\cite{owen1995game}. Such characteristic functions are particularly interesting as they enable an efficient bounding technique to prune part of the search space during the execution of our branch and bound algorithm, presented in Section~\ref{sec:cfss}.

\subsubsection{Collective energy purchasing}
\label{sec:energy}

In the collective energy purchasing scenario, \citeN{eps351521} proposed the characteristic function
$$v\left(C\right) = \underbrace{\sum\nolimits^T_{t=1} q^t_{S}\left(C\right) \cdot p_{S} + T\cdot q_{F}\left(C\right) \cdot p_{F}}_{energy\left(C\right)} + \kappa\left(C\right),$$
\noindent where $T=48$ is the number of energy measurements in each profile, $p_S\in\mathbb{R}^-$ and $p_F\in\mathbb{R}^-$ represent the unit prices of energy in the spot and forward market respectively, $q_{F}:\mathcal{FC}(G)\to\mathbb{R}^-$ stands for the time unit amount of electricity to buy in the forward market and $q^t_{S}:\mathcal{FC}(G)\to\mathbb{R}^-$ for the amount to buy in the spot market at time slot $t$.\footnote{Unit prices (whose values are reported in Section~\ref{sec:exp}) are negative numbers, i.e., they belong to the set $\mathbb{R}^-=\{i\in\mathbb{R}\mid i\leq 0\}$, to reflect the direction of payments. Thus, the values of the characteristic function are negative as well, hence they represent costs that, maintaining the maximisation task, we aim to minimise.} $energy:\mathcal{FC}(G)\to\mathbb{R}^-$ represents the total energy cost.

Finally, $\kappa:\mathcal{FC}(G)\to\mathbb{R}^-$ stands for a coalition management cost that depends on the size of the coalition and captures the intuition that larger coalitions are harder to manage. 
The definition of this cost depends on several low level issues (e.g., the capacity of the power networks connecting the customers in the groups, legal fees, and other costs associated to group contracts, etc.), hence a precise definition of this term goes beyond the scope of this paper. Following \citeN{eps351521} we use $\kappa(C)= -\vert C \vert^\gamma$ with $\gamma > 1$ to introduce a non-linear element that penalises the formation of larger coalitions.
Hence, the \emph{collective energy purchasing} function is defined as
\begin{equation}\label{eq:energym+a}
V(CS) = \underbrace{\sum\nolimits_{C\in CS}\left[\sum\nolimits^T_{t=1} q^t_{S}\left(C\right) \cdot p_{S} + T\cdot q_{F}\left(C\right) \cdot p_{F}\right]}_{V^+\left(CS\right)} + \underbrace{\sum\nolimits_{C\in CS}\kappa\left(C\right)}_{V^-\left(CS\right)}.
\end{equation}

\begin{proposition}
\label{prop:energyma}
The collective energy purchasing function is $\mplusa$.
\end{proposition}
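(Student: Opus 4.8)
\begin{myproof}[Proof proposal]
The plan is to read off the required decomposition directly from Equation~\ref{eq:energym+a}, which already writes $V(CS) = V^+(CS) + V^-(CS)$ with $V^+$ the energy term and $V^-$ the coalition-management term. By Definitions~\ref{def:supersub} and~\ref{def:Vsupersub} it then suffices to show that the underlying per-coalition functions $v^+ = energy$ and $v^- = \kappa$ are respectively superadditive and subadditive on $\mathcal{FC}(G)$; the claim then follows from Definition~\ref{def:ma}. So the proof splits into two independent parts, one for each component.

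\textbf{Subadditivity of $V^-$.} Here $\kappa(C) = -|C|^\gamma$ with $\gamma > 1$, so for disjoint $S,T$ subadditivity of $\kappa$ amounts to $(|S|+|T|)^\gamma \ge |S|^\gamma + |T|^\gamma$. This is exactly the superadditivity of $x \mapsto x^\gamma$ on $\mathbb{R}_{\ge 0}$ for $\gamma \ge 1$, immediate from convexity (or from $(a+b)^{\gamma-1}(a+b) \ge a^{\gamma-1}a + b^{\gamma-1}b$ since $a,b\ge 0$ and the exponent $\gamma-1$ is nonnegative). The antimonotonicity demanded by Definition~\ref{def:ma} is equally clear, since $|C|$ is monotone in $C$ and $-x^\gamma$ is decreasing. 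I expect this part to be entirely routine.

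\textbf{Superadditivity of $V^+$.} This is the only delicate step. I would argue it at the level of purchasing plans: a plan for a coalition $C$ is a choice of a forward quantity $q_F(C)$ (bought in all $T$ slots) together with per-slot spot quantities $q_S^t(C)$ that jointly cover the aggregate demand of $C$, and $energy(C)$ is the (negative) cost of the cost-minimising such plan. Given disjoint $S$ and $T$, the slot-wise sum of an optimal plan for $S$ and an optimal plan for $T$ is a feasible plan for $S\cup T$ whose cost equals $energy(S) + energy(T)$; since $energy(S\cup T)$ is the value of the \emph{best} plan for $S\cup T$ and values here are negative costs being maximised, we obtain $energy(S\cup T) \ge energy(S) + energy(T)$, i.e.\ superadditivity. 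If instead $q_F$ and $q_S^t$ are fixed by the explicit baseload rule of~\citeN{vinyals-ENERGYCON-12} rather than by an optimisation, the same conclusion follows because the baseload of a sum of demand profiles is at least the sum of the baseloads while forward energy is cheaper than spot energy, together with concavity of the per-slot $\min$. Either way $v^+$ is superadditive, hence so is $V^+$ by Definition~\ref{def:Vsupersub}.

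The main obstacle is thus contained entirely in the last paragraph: it requires committing to the precise definition of $q_S^t$ and $q_F$ from~\citeN{vinyals-ENERGYCON-12} (Section~\ref{sec:energy} leaves them as abstract functions) and checking that, under that definition, plans really do combine additively. Once this is pinned down, the remainder is bookkeeping, and combining the two parts via Definition~\ref{def:ma} finishes the proof.
\end{myproof}
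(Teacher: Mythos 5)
Your proposal follows the same route as the paper: read the decomposition $V=V^{+}+V^{-}$ off Equation~\ref{eq:energym+a}, show the energy component is superadditive (merging disjoint coalitions can only shift demand from the dearer spot market to the cheaper forward market, so the combined plan is at least as good as the sum of the separate ones) and the management-cost component $\kappa(C)=-|C|^{\gamma}$, $\gamma>1$, is subadditive by convexity of $x\mapsto x^{\gamma}$. This matches the paper's argument, with your purchasing-plan/baseload discussion simply filling in the detail the paper defers to its online appendix; it is correct.
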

\begin{myproof}[Sketch of proof]
The cost of the energy necessary to fulfil the aggregated consumption profiles of the coalitions, i.e., $V^+\left(CS\right)$, is clearly superadditive, while the sum of the coalition management costs, i.e., $V^-\left(CS\right)$, is subadditive, as they increase when coalition sizes increase. Full proof is provided in the Online Appendix.
\end{myproof}

\subsubsection{Edge sum with coordination cost}
\label{sec:clustering}

In the \emph{edge sum with coordination cost} function every edge of $G$ is mapped to a real value by a function $w:\mathcal{E}\rightarrow\mathbb{R}$~\cite{deng1994}.
Each coalitional value is the sum of the weights of the edges among its members. In order to have a better description of the management and communication costs in larger coalitions, we also introduce a penalising factor $\kappa\left(C\right)$,\footnote{Such penalising factor makes the edge sum with coordination cost function to violate the IDM property (cf. Section~\ref{sec:stategccf}), therefore the approach proposed by \citeN{Voice2012b} cannot be used.} with the same definition given in the previous section. Hence, we define this function as
\begin{equation}
\label{eq:cluster1}
v\left(C\right)=\sum\nolimits_{e\in edges\left(C\right)}w(e) + \kappa\left(C\right),
\end{equation}
\noindent 
where the function $edges:\mathcal{FC}(G)\to 2^\mathcal{E}$ provides the set of all the edges connecting any two members of a given coalition $C$, i.e., $edges\left(C\right)=\left\{(v_1,v_2)\in \mathcal{E}\mid v_1\in C \text{ and } v_2 \in C\right\}$.
In order to characterise this scenario with an $\mplusa$ function, we rewrite Equation \ref{eq:cluster1} as
$$
v\left(C\right)=\sum\nolimits_{e\in edges\left(C\right)}\left[w^+(e)+w^-(e)\right] + \kappa\left(C\right),
$$
$$\text{where}\quad\quad\begin{array}{l l}{\!\!\!w^+(e) = \left\{ 
\begin{array}{l l}
\!\!w(e), &\text{if $w(e)\geq0$},\\
\!\!0, &\text{otherwise,}
\end{array}\right.}&{\ w^-(e) = \left\{ 
\begin{array}{l l}
\!\!w(e), &\text{if $w(e)<0$},\\
\!\!0, & \text{otherwise.}
\end{array}\right.}\end{array}$$
In other words, $\sum\nolimits_{e\in edges\left(C\right)} w^+(e)$ represents the sum of all the positive weights of the edges in $edges\left(C\right)$, while $\sum\nolimits_{e\in edges\left(C\right)} w^-(e)$ represents the sum of the negative ones.
The \emph{edge sum with coordination cost} function is then defined as $$
V\left(CS\right) = \underbrace{\sum\nolimits_{C\in CS}\left[\sum\nolimits_{e\in edges\left(C\right)}w^+(e)\right]}_{V^+\left(CS\right)}+\underbrace{\sum\nolimits_{C\in CS}\left[\sum\nolimits_{e\in edges\left(C\right)}w^-(e) + \kappa\left(C\right)\right]}_{V^-\left(CS\right)}.$$
\begin{proposition}
\label{prop:edgema}
The edge sum with coordination cost function is $\mplusa$.
\end{proposition}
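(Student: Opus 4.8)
The plan is to reuse the explicit splitting $V=V^++V^-$ written down immediately before the statement and to check only that $V^+$ is superadditive and $V^-$ is subadditive; by Definitions~\ref{def:supersub} and~\ref{def:Vsupersub} this reduces to verifying the same properties for the underlying coalitional functions $v^+(C)=\sum_{e\in edges(C)}w^+(e)$ and $v^-(C)=\sum_{e\in edges(C)}w^-(e)+\kappa(C)$. Once both halves are in place, $V$ is $\mplusa$ by Definition~\ref{def:ma}, exactly as for the collective energy purchasing function (Proposition~\ref{prop:energyma}), with $V^+$ playing the role of the energy cost and the negative weights together with $\kappa$ playing the role of the coalition management cost.

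For the superadditive part, fix disjoint coalitions $S,T$. Since $S\cap T=\emptyset$, no edge can lie in both $edges(S)$ and $edges(T)$, so these two edge sets are disjoint, and both are contained in $edges(S\cup T)$ (which may additionally contain edges with one endpoint in $S$ and the other in $T$). As every $w^+(e)\geq 0$, summing $w^+$ over the larger set $edges(S\cup T)$ gives at least the sum over $edges(S)\cup edges(T)$, i.e.\ $v^+(S\cup T)\geq v^+(S)+v^+(T)$; the same remark shows $v^+$ is monotonic increasing, so $V^+$ satisfies the requirement of Definition~\ref{def:ma}.

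For the subadditive part I would handle the two summands of $v^-$ separately. The edge term mirrors the previous argument with the sign reversed: writing $edges(S\cup T)$ as the disjoint union of $edges(S)$, $edges(T)$ and the set of cross edges between $S$ and $T$, and using $w^-(e)\leq 0$, the extra cross edges only decrease the sum, so $\sum_{e\in edges(\cdot)}w^-(e)$ is subadditive (and antimonotonic). For the management cost $\kappa(C)=-|C|^\gamma$ with $\gamma>1$, disjointness gives $|S\cup T|=|S|+|T|$, so the inequality $\kappa(S\cup T)\leq\kappa(S)+\kappa(T)$ is precisely $(|S|+|T|)^\gamma\geq |S|^\gamma+|T|^\gamma$. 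This is the only genuinely non-routine point, and it is the elementary estimate $\left(a+b\right)^\gamma=(a+b)^{\gamma-1}(a+b)\geq a^{\gamma-1}a+b^{\gamma-1}b=a^\gamma+b^\gamma$, valid for all $a,b\geq 0$ because $a+b\geq a$, $a+b\geq b$ and $\gamma-1\geq 0$. Hence $\kappa$ is subadditive (and clearly antimonotonic), and since a sum of two subadditive functions is subadditive, so is $v^-$, and therefore $V^-$.

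Combining the two parts, $V=V^++V^-$ is a sum of a superadditive (monotonic increasing) function and a subadditive (antimonotonic) function, hence $\mplusa$. I do not expect any real difficulty: the only subtlety is keeping track of the cross edges with the correct sign in each of the two sums, and the single analytic ingredient is the power inequality above.
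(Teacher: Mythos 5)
Your proposal is correct and follows essentially the same route as the paper: it uses the paper's own decomposition, verifies that the positive-weight edge sum $V^+$ is superadditive and that the negative-weight edge sum together with $\kappa$ forms a subadditive $V^-$ (the cross-edge bookkeeping and the power inequality $(a+b)^\gamma\geq a^\gamma+b^\gamma$ for $\gamma>1$ being exactly the ingredients needed), and then invokes Definition~\ref{def:ma}. No gaps.
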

\begin{myproof}[Sketch of proof]
It is easy to verify that $V^+\left(CS\right)$, i.e., the sum of all positive edges, is superadditive, while the sum of the negative ones, i.e., $V^-\left(CS\right)$, is subadditive. Full proof is provided in the Online Appendix.
\end{myproof}

\subsubsection{Coalition size with distance cost}
\label{sec:coalsize}

The \emph{coalition size with distance cost} can be modelled evaluating each coalition $C$ with the function
\begin{equation}
\label{eq:coalsize}
v\left(C\right)=|C|^\alpha - \sum\nolimits_{\left(i,j\right)\in C \times C}d\left(i,j\right),
\end{equation}

\noindent where $\alpha\ge 1$, and $d: \mathcal{A}\times\mathcal{A} \rightarrow \mathbb{R}^+$ is a function that measures the distance between the opinions of agent $i$ and agent $j$. From Equation~\ref{eq:coalsize} it follows that the input of our problem has size $N^2$, where $N$ is the total number of agents, since we must know the distances between each pair or agents.
The \emph{coalition size with distance cost} function of a coalition structure $CS$ is then defined as
$$
V\left(CS\right) = \underbrace{\sum\nolimits_{C\in CS}|C|^\alpha}_{V^+\left(CS\right)}+\underbrace{\sum\nolimits_{C\in CS}\left[- \sum\nolimits_{\left(i,j\right)\in C \times C}d\left(i,j\right)\right]}_{V^-\left(CS\right)}.
$$

\begin{proposition}
\label{prop:coalma}
The coalition size with distance cost function is $\mplusa$.
\end{proposition}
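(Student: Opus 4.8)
The plan is to follow exactly the template of Propositions~\ref{prop:energyma} and~\ref{prop:edgema}: take the decomposition $V = V^+ + V^-$ already displayed just before the statement, and verify that the underlying coalitional functions $v^+(C) = |C|^\alpha$ and $v^-(C) = -\sum_{(i,j)\in C\times C} d(i,j)$ are respectively superadditive (monotonic increasing) and subadditive (antimonotonic). By Definition~\ref{def:Vsupersub} these properties transfer from $v^+,v^-$ to $V^+,V^-$, and Definition~\ref{def:ma} then yields the claim. Thus the whole proof reduces to two elementary verifications on a pair of disjoint coalitions $S,T\subseteq\mathcal{A}$.

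For the superadditive part I would establish the scalar inequality $(a+b)^\alpha \geq a^\alpha + b^\alpha$ for all $a,b\geq 0$ and $\alpha\geq 1$: if $a+b=0$ it is trivial, otherwise divide by $(a+b)^\alpha$ and set $s=a/(a+b)$, $t=b/(a+b)$, so that $s,t\in[0,1]$ and $s+t=1$; since $\alpha\geq1$ we have $s^\alpha\leq s$ and $t^\alpha\leq t$, hence $s^\alpha+t^\alpha\leq s+t=1$. Applying this with $a=|S|$, $b=|T|$ (legitimate because $S\cap T=\emptyset$ gives $|S\cup T|=|S|+|T|$) shows $v^+(S\cup T)\geq v^+(S)+v^+(T)$, i.e. $v^+$ is superadditive. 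Monotonic increasingness is then immediate: $v^+(C)=|C|^\alpha\geq 0$, so for $S\subseteq C$ we write $C=S\cup(C\setminus S)$ as a disjoint union and get $v^+(C)\geq v^+(S)+v^+(C\setminus S)\geq v^+(S)$.

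For the subadditive part I would use that, since $S\cap T=\emptyset$, the product set splits as a disjoint union $(S\cup T)\times(S\cup T) = (S\times S)\,\cup\,(T\times T)\,\cup\,(S\times T)\,\cup\,(T\times S)$, whence
\begin{equation*}
v^-(S\cup T) = -\!\!\sum_{(i,j)\in S\times S}\!\!d(i,j)\;-\!\!\sum_{(i,j)\in T\times T}\!\!d(i,j)\;-\!\!\sum_{(i,j)\in S\times T}\!\!d(i,j)\;-\!\!\sum_{(i,j)\in T\times S}\!\!d(i,j).
\end{equation*}
The first two sums are exactly $v^-(S)+v^-(T)$, while the last two are non-positive because $d$ takes values in $\mathbb{R}^+$; hence $v^-(S\cup T)\leq v^-(S)+v^-(T)$, so $v^-$ is subadditive. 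Antimonotonicity follows as before, since $v^-(C)\leq 0$. Combining the two parts, $V^+$ is superadditive and $V^-$ is subadditive, so $V=V^++V^-$ is an $\mplusa$ function.

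The only step carrying any content is the scalar inequality $(a+b)^\alpha\geq a^\alpha+b^\alpha$; everything else is bookkeeping about how $C\times C$ distributes over a partition, and I do not anticipate a real obstacle. The one point to handle carefully is that the diagonal pairs $(i,i)$ stay inside the blocks $S\times S$ or $T\times T$ (never in the cross blocks $S\times T$, $T\times S$), so they do not interfere with the sign argument. It is also worth noting in passing that this is the most straightforward of the three benchmark functions to classify, because the split into superadditive and subadditive parts is given directly by the two summands of Equation~\ref{eq:coalsize}, with no need for the positive/negative-weight separation required in the edge sum case.
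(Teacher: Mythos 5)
Your proposal is correct and follows essentially the same route as the paper: the same decomposition into $v^+(C)=|C|^\alpha$ and $v^-(C)=-\sum_{(i,j)\in C\times C}d(i,j)$, with superadditivity of the former from $(a+b)^\alpha\geq a^\alpha+b^\alpha$ and subadditivity of the latter from the non-positive cross terms between the two disjoint coalitions. You merely spell out the scalar inequality and the cross-term bookkeeping that the paper dismisses as easy to verify.
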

\begin{proof}
On the one hand, it is easy to verify that $v^+(C)=|C|^\alpha$ is a superadditive function, assuming that $\alpha\ge 1$. On the other hand, $v^-(C)=-\sum_{\left(i,j\right)\in C \times C}d\left(i,j\right)$ is subadditive, since $v^-(C_1\cup C_2)=v^-(C_1)+v^-(C_2)- \sum_{i\in C_1,j\in C_2} d\left(i,j\right)\leq v^-(C_1)+v^-(C_2)$.
\end{proof}

\noindent
These functions will be used in our experimental evaluation in the next section.
 
\newcommand{\plotsize}{0.38}
\newcommand{\intraspace}{\hspace{7mm}}

\section{Empirical evaluation}\label{sec:exp}

\noindent The main goals of our empirical evaluation of CFSS are:
\begin{enumerate}
\item To evaluate its runtime performance with respect to DyCE considering a variety of graphs, both realistic (i.e., subgraphs of the Twitter network) and synthetic (i.e., scale-free networks). Additional experiments on community networks and a detailed discussion on these network topologies are in the Online Appendix.
\item To evaluate the effectiveness of our bounding technique.
\item To evaluate the anytime performance and guarantees that our approach can provide when scaling to very large numbers of agents (i.e., more than $2700$).
\item To compare the quality of our approximate solutions with the ones computed by C-Link~\cite{eps351521} on large-scale instances.
\item To evaluate the speed-up that can be obtained by using multi-core machines.
\item To evaluate the speed-up produced by our edge ordering heuristic.  
\end{enumerate}

\noindent Following \citeN{Voice2012a}, we consider scale-free networks generated with the Barab\'asi-Albert model with $m\in\left\{1,2,3\right\}$. This parameter determines the sparsity of the graph, as every newly added node is connected, on average, to $m$ existing nodes. It is easy to verify that the average degree of a scale-free network is $\sim 2\cdot m$.
 We compare our approach with DyCE in our three reference domains, measuring the runtime in seconds.
In our characteristic functions we use the following parameters:
\begin{itemize}
\item Following \citeN{eps351521}, in the \emph{collective energy purchasing} function we set $p_S$$=$$-80$ and $p_F$$=$$-70$. The consumption data is provided by a realistic dataset, comprising the measurements collected over a month from $2732$ households in the UK.
\item In the \emph{edge sum with coordination cost} function we assigned a uniformly distributed random weight within $\left[-10,10\right]$ to each edge.
\item Following \citeN{eps351521}, in both the above scenarios we considered $\gamma=1.3$.
\item In the \emph{coalition size with distance cost} function we assigned a uniformly distributed random value within $\left[0,100\right]$ to each distance between a pair of different agents (with $d(i,i)=0$), and we considered $\alpha=2.2$, motivated by the remarks in Section \ref{sec:anytime}.
\end{itemize}

\noindent
We conducted an additional set of experiments in which the graph $G$ is a subgraph of a large crawl of the Twitter social graph.
Specifically, such dataset is a graph with 41.6 million nodes and 1.4 billion edges published as part of the work by \citeN{twitter}.
We obtain $G$ by means of a standard algorithm \cite{russell2013mining} to extract a subgraph from a larger graph, i.e., a breadth-first traversal starting from a random node of the whole graph, adding each node and the corresponding arcs to $G$, until the desired number of nodes is reached.

Moreover, we implemented a multi-threaded version of CFSS, namely P-CFSS (i.e., Parallel CFSS), and we analysed the speed-up of P-CFSS using Amdahl's law~\cite{amdahl}, as it provides the maximum theoretical speed-up that can be achieved. 
All our results refer to the average value over 20 repetitions for each experiment. CFSS\footnote{Our implementation of CFSS is publicly available at \url{https://github.com/filippobistaffa/CFSS}.} and C-Link are implemented in C,
while we used the DyCE implementation provided by its authors. We run our tests on a machine with a 3.40GHz CPU and 32 GB of memory.

\subsection{DyCE vs CFSS: runtime comparison}\label{sec:dyceexp}

\noindent In our experiments using scale-free networks, CFSS outperforms DyCE when coalition values are shaped by the above-described benchmark functions (as shown in Figures \ref{fig:6}a, \ref{fig:6}b and \ref{fig:6}c). 
Specifically, for the \emph{edge sum with coordination cost} function, CFSS outperforms DyCE by 4 orders of magnitude on networks with average connectivity (i.e., for $m=2$), and by 3 orders of magnitude on networks with higher connectivity (i.e., for $m=3$). Most probably this is due to the fact that the upper bound we adopt in this case closely resembles the function, allowing us to prune significant portions of the search space (see Section~\ref{exp:bound} for a more detailed discussion). 
In the \emph{collective energy purchasing} scenario with 30 agents and $m=2$, CFSS is 4.7 times faster than DyCE, and it is at least 2 orders of magnitude faster for $m=1$. However, DyCE is significantly faster (44 times) than CFSS for $m=3$. 
The adoption of the \emph{coalition size with distance cost} function produces a similar behaviour, with a performance improvement for our method. In fact, CFSS is 17 times faster than DyCE for $m=2$, and only 3 times slower for $m=3$. On the other hand, the runtime of DyCE equals the previous case, since this approach is not sensitive to the values of the characteristic function.
In our tests using subgraphs of the Twitter network, CFSS is at least four orders of magnitude faster than DyCE when solving instances with 30 agents (the biggest instances that DyCE can solve), and it can scale up to 45 agents. These results confirm the very good performance of CFSS when considering sparse networks. In fact, the average degree of these subgraphs is comparable with the one of a scale-free network with $1<m<2$.

\noindent
In all our tests, we increased the number of agents until the execution time reached $10^5$ seconds.
Notice that, in general, DyCE cannot scale over 30 agents (due to its exponential memory requirements), while CFSS does not have such limitation, hence it is possible to reach instances with thousands of agents, as shown in Section \ref{sec:anytime}.
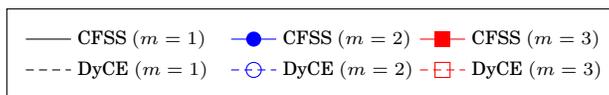
\begin{figure}[H]
\centering
\begin{tikzpicture}[framed]
\hspace*{0.5mm}
\scriptsize
\matrix {
\draw (0,0) -- (6mm,0); & \node[right]{CFSS $\left(m=1\right)\quad$}; & 
\draw[color=blue] (0,0) -- (6mm,0); \fill[color=blue] (3mm,0) circle (1mm); & \node[right]{CFSS $\left(m=2\right)$}; & 
\draw[color=red] (0,0) -- (6mm,0); \fill[color=red] (2mm,1mm) rectangle (4mm,-1mm); & \node[right]{CFSS $\left(m=3\right)$}; \\
\draw[densely dashed] (0,0) -- (6mm,0); & \node[right]{DyCE $\left(m=1\right)$}; &
\draw[color=blue,densely dashed] (0,0) -- (6mm,0); \draw[color=blue] (3mm,0) circle (1mm); & \node[right]{DyCE $\left(m=2\right)$}; &
\draw[color=red,densely dashed] (0,0) -- (6mm,0); \draw[color=red] (2mm,1mm) rectangle (4mm,-1mm); & \node[right]{DyCE $\left(m=3\right)$}; \\
};
\end{tikzpicture}
\caption{Legend for scale-free networks.}
\end{figure}

\pgfplotsset{
ytick={1e-4,1e-3,1e-2,1e-1,1e0,1e1,1e2,1e3,1e4,1e5,1e6},
xtick={15,20,25,30,35,40,45,50,55,60},
xmajorgrids=true,
major grid style={dashed},
xlabel={Number of agents},
ylabel={Execution time (s)},
every axis/.append style={font=\large},
legend style={at={(0.97,0.03)},anchor=south east},
}

\begin{figure}[t]
\centering
\includegraphics{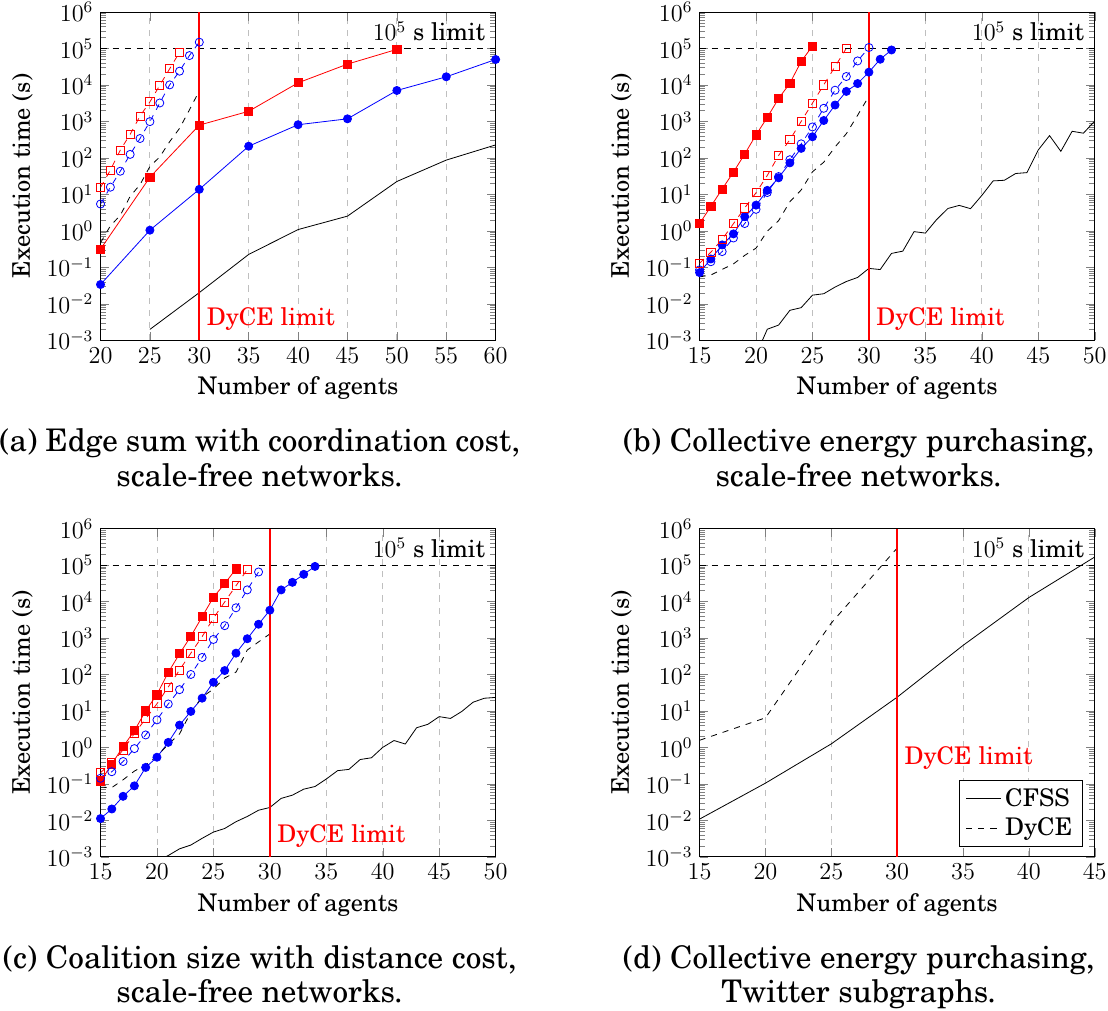}
\caption{Runtime to compute the optimal solution.}
\label{fig:6}
\end{figure}
 
\subsection{Bounding technique effectiveness}\label{exp:bound}
\noindent
Here we compare the number of configurations explored by CFSS w.r.t. the entire search space, i.e., the one explored by Algorithm~\ref{alg:VisitAllCoalitionStructures}, to measure of the number of search nodes pruned by our bounding technique. We consider $n=30$, adopting scale-free networks with $m=2$.
When the coalitional values are provided by the \emph{collective energy purchasing} function, CFSS can compute the optimal solution exploring a number of configurations which is, on average, 0.32\% of the entire search space. We measured a similar value in the \emph{coalition size with distance cost} scenario (i.e., 0.28\%). In the \emph{edge sum with coordination cost} scenario (which allows a more precise upper bound, as explained in Remark~\ref{cor:edge}), only 0.0045\% of the entire search space is explored.

\subsection{Edge ordering heuristic}\label{exp:order}
\noindent
The above table shows the speed-up obtained by using the ordering heuristic described in Section~\ref{sec:edgeordering} and considering the \emph{collective energy purchasing} and the \emph{coalition size with distance cost} functions. Even though our heuristic is applicable also in the \emph{edge sum with coordination cost} scenario, such function has not been included in this analysis since, as stated in Remark~\ref{cor:edge}, it allows an ad-hoc bounding method that is more effective than the general one.
Our experiments show a clear benefit in the adoption of such a heuristic, producing a maximum performance gain of 843\% in the first scenario and 338\% in the second one. Across all experimental scenarios, such a heuristic allows an average speed-up of 295\% considering both domains. 
\begin{table}[t]\centering
\footnotesize
\begin{tabular}{cccc}
{\bf Characteristic function}&{\bf Minimum} &\multicolumn{1}{c}{\bf Average}&\multicolumn{1}{c}{\bf Maximum}\\
\hline
Collective energy purchasing & $176\%$ & $367\%$ & $843\%$\\
Coalition size with distance cost & $136\%$ & $222\%$ & $338\%$\\
\end{tabular}\end{table}
\subsection{Anytime approximate performance}
\label{sec:anytime}

We evaluate the performance of the approximate version of CFSS on instances with thousands of agents considering the \emph{Performance Ratio} (PR)~\cite{ausiello2012complexity}, a standard measure to evaluate approximate algorithms defined as the ratio between the approximate solution and the optimal one on a given instance $I$. As computing the optimal solution for such large instances is not possible, we define the \emph{Maximum Performance Ratio} (MPR) as the ratio between the approximate solution and the upper bound on the optimal solution defined in Equation~\ref{eq:bound}.

\begin{mydef}\label{def:mpr}
Given an instance $I$, an approximate solution $Approx(I)$ and an upper bound on the optimal solution as $Bound(I)$, we define the Maximum Performance Ratio $MPR(I)=\max\left(\frac{Approx(I)}{Bound(I)},\frac{Bound(I)}{Approx(I)}\right)$.  
\end{mydef}
\noindent
$MPR(I)$ represents an upper bound of the PR on the instance $I$. The MPR provides an important quality guarantee on the approximate solution $Approx(I)$, since $Approx(I)$ cannot be worse than by a factor of $MPR(I)$ w.r.t. the optimal solution.

\subsubsection{Collective energy purchasing}

Figure \ref{fig:7}a shows the value of the MPR in the \emph{collective energy purchasing} scenario, using $n\in\left\{100,500,1000,1500,2000,2732\right\}$, adopting scale-free networks with $m=4$ and Twitter subgraphs as network topologies, and considering a time budget of 100 seconds. Other values for $m$ show a similar behaviour (not reported here). We plot the average and the standard error of the mean over 20 repetitions.
It is clear that the network topology does not impact the quality guarantees of our approach, hence we only adopt scale-free networks in the following experiments. In contrast, the MPR is heavily influenced by the nature of the characteristic function, as clarified later in this section.
In addition, the results show that, for 100 agents, the provided bound is only 4.7\% higher than the solution found within the time limit, reaching a maximum of +11.65\% when the entire dataset is considered, i.e., with 2732 agents. Such small decrease is due to the fact that, for bigger instances, it is possible to explore a smaller part of the search space in the considered time budget, leaving a bigger portion to the estimation of the bound.
Nonetheless, in this experiment CFSS provides a MPR of at most 1.12 and thus solutions that are at least 88\% of the optimal. This confirms the effectiveness of this bounding technique when applied to the energy domain, which allows us to provide solutions and quality guarantees for problems involving a very large number of agents.
In our tests, the bound is assessed at the root, without any frontier expansion, so it can be computed almost instantly, thus devoting all the available runtime to the search for a solution. This choice is further motivated by the fact that, in this scenario, the bound improves of a negligible value in the first levels of the search tree, due to the particular definition of the characteristic function. More precisely, if we consider a frontier formed by the children of the root, in each of them the bound of $V^-(\cdot)$ will improve by a factor of $2^\gamma-2\approx1.5$ (i.e., the difference between the coalition management cost of the new coalition and the ones of the two merged singletons). On the other hand, the bound of $V^+(\cdot)$ will remain constant: in fact, since we are taking the maximum (i.e., the worst) bound at the frontier (as shown in Equation \ref{eq:bound}), the result of this maximisation will still be equal to $v^+(\mathcal{A})$, because in at least one of the children nodes the computation of $\overline{CS}$ will result in joining all the agents together.
In this case, it is not worth to expand the frontier from the root, since the gain would be insignificant w.r.t. the additional computational cost.

\pgfplotsset{
xlabel={},
ylabel={},
legend cell align=left,
legend style={at={(0.97,0.03)},anchor=south east},
x label style={at={(axis description cs:0.5,-0.1)},anchor=north},
}

\begin{figure}[t]
\centering
\includegraphics{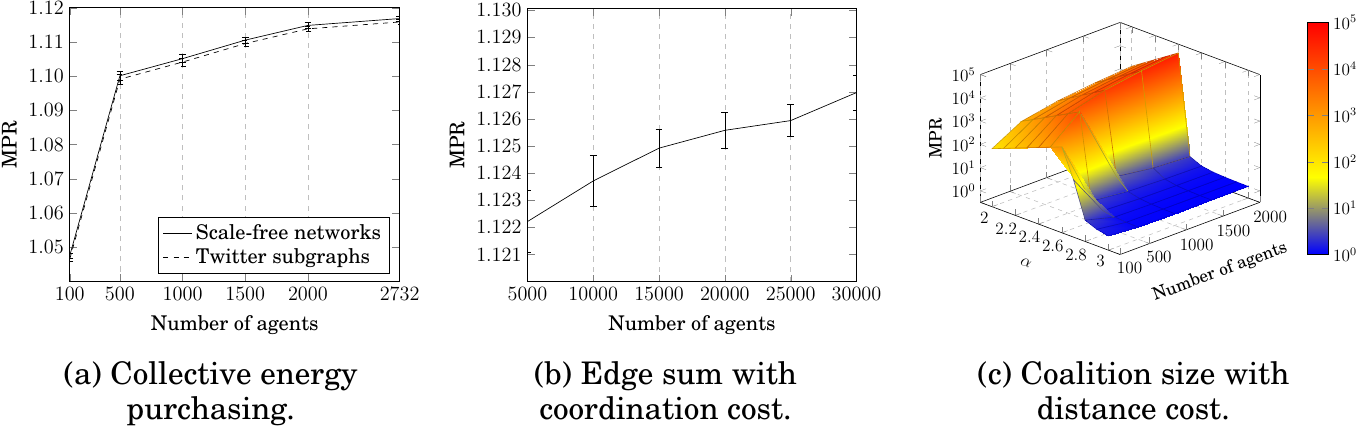}
\caption{Maximum Performance Ratio (MPR) in the considered domains.}
\label{fig:7}
\end{figure}
 
\subsubsection{Edge sum with coordination cost}
We further evaluate the scalability of our approach by considering Twitter subgraphs as network topologies, and the \emph{edge sum with coordination cost} function, which allows to generate coalitional values for instances with any number of agents. Such a function can be either positive or negative (in contrast with the \emph{collective energy purchasing} one, which is always negative to represent its nature of cost). Hence, it is possible that $Approx(I)$ is negative and $Bound(I)$ is positive, resulting in a \emph{negative} MPR. In order to avoid this unreasonable behaviour, here we consider $MPR(I)=\frac{Bound(I)-LB(I)}{Approx(I)-LB(I)}$, where $LB(I)$ is a lower bound on the characteristic function considering the instance $I$. Notice that it is always possible to compute $LB(I)$ for the \emph{edge sum with coordination cost} function as $LB(I) = V^-(\mathcal{A})$.

Figure \ref{fig:7}b shows that, on our machine, CFSS can scale up to instances with 30000 agents, providing solutions with a MPR of 1.127 (at least 89\% of the optimal).

\subsubsection{Coalition size with distance cost}

The MPR exhibits a different behaviour when considering the \emph{coalition size with distance cost} function, being heavily influenced by the value of the $\alpha$ exponent. Figure~\ref{fig:7}c shows how the MPR varies significantly with respect to $\alpha \in \left[2,3\right]$, growing up to 41825.6 for $\alpha=2.4$ and then falling down to 1.13 for $\alpha=2.7$, with a tendency to 1 when increasing this exponent. This behaviour can be explained by reasoning about the structure of the characteristic function. Up to $\alpha=2.4$, the subadditive component (i.e., $-\sum_{C\in CS} \sum_{\left(i,j\right)\in C \times C}d\left(i,j\right)$) dominates the superadditive one (i.e., $\sum_{C\in CS}|C|^\alpha$), hence the search for a solution is not able to find any coalition structure better than the initial one (i.e., the coalition structure with all singletons, which is probably the optimal one). Nonetheless, the MPR keeps growing when we increase $\alpha$, since it equals $\frac{N^\alpha}{N}=N^{\alpha-1}$, i.e., the bound computed at the root (i.e., $V^+(\mathcal{A})=N^\alpha$) divided by the value of the initial solution (i.e., $N$).
On the other hand, when $\alpha$ is sufficiently large (i.e., for $\alpha=2.5$), this behaviour is inverted, because $V^+(\cdot)$ has a greater impact and the entire characteristic function tends to become superadditive. Thus, coalition structures closer to the grand coalition represent good solutions, which explains why the MPR tends to 1 when we increase $\alpha$.
These remarks motivate us to study the impact of $\alpha$ also on the optimal algorithm. Figure~\ref{fig:alpha} displays the runtime needed to find the optimal solution on random instances with 25 agents on scale-free networks with $m=2$, showing that the performance of CFSS decreases when we increase $\alpha$ from 2 to 3. The value of the bound provided by Equation~\ref{eq:bound1} is larger when $\alpha$ grows, hence its quality decreases, producing a less effective bounding technique and, thus, a higher runtime.
To summarise, the adoption of a bigger $\alpha$ in the \emph{coalition size with distance cost} function negatively impacts the performance of our approach when computing optimal solutions, while improving approximate solutions as $\alpha$ grows. This motivates our choice of defining $\alpha=2.2$ in the previous experiments, as it represents a good value to benchmark CFSS. In fact, it is big enough to avoid excessively low runtimes in the optimal version, but it does not exceed the 2.4 boundary, beyond which the quality guarantees it provides are extremely good (i.e., the MPR tends to 1).

\subsection{CFSS vs C-Link: solution quality comparison}\label{sec:clink}

We further evaluate the approximate performance of CFSS by comparing it against C-Link~\cite{eps351521}, an heuristic approach to solve CSG based on hierarchical clustering. We chose C-Link among the other approaches discussed in Section~\ref{sec:heu} because it is the most recent one and it has also been tested using the \emph{collective energy purchasing} function by its authors. 
Here we adopt the same experimental setting discussed in the previous section, i.e., we consider scale-free networks with $n$$\in$$\left\{100,500,1000,1500,2000,2732\right\}$ and $m$$=$$4$ (generating 20 random repetitions of each experiment), and we adopt the \emph{collective energy purchasing} characteristic function. We solve each instance with C-Link (adopting the best heuristic proposed by \citeN{eps351521}, i.e., Gain-Link) and then we run CFSS on the same instance with a time budget equal to C-Link's runtime.
Figure \ref{fig:clink} shows the average and the standard error of the mean of the ratio between the value of the solution computed by C-Link and the one computed by CFSS. Since we consider solutions with negative values, when such ratio is $>1$ the solution computed by C-Link is better (i.e., has a lower cost) than the one computed by CFSS. Our results show that, even though C-Link can compute better solutions, the quality of our solutions is worse only by $3\%$ for $100$ agents. When we consider the entire dataset (i.e., with $2732$ agents) the quality of our solutions is still within the $9\%$ w.r.t. the counterpart. 
Notice that C-Link slightly outperforms CFSS. This comes as no surprise since the fundamental difference between C-Link and CFSS is that C-Link does a backtrack-free visit of the search graph adopting a greedy heuristic to determine the choice at each step. In other words, C-Link explores only one path of the search graph. 
On the other hand, CFSS does not employ any heuristic as it is designed to execute a systematic visit of the search graph with backtracking.
Notice that we can easily include the C-Link's greedy heuristic into CFSS to guide the visit of the children nodes in the search. With C-Link's heuristic, CFSS first explores the same path explored by C-Link, and then, if given more time, continues the visit of the rest of the search space by backtracking. Since we provide CFSS with a time budget equal to C-Link's runtime, if we employ C-Link's heuristic then CFSS effectively becomes the same algorithm as C-Link, and hence returns solutions of the same quality.

\begin{figure}[t]
\centering
\begin{minipage}{0.325\columnwidth}
\pgfplotsset{
ytick={1e0,1e1,1e2,1e3,1e4,1e5},
xlabel={\textcolor{white}{N}$\alpha$\textcolor{white}{N}},
ylabel={Execution time (s)},
xtick=data,
x label style={at={(axis description cs:0.5,-0.1)},anchor=north},
}
\resizebox{\columnwidth}{35mm}{
	\begin{tikzpicture}
		\begin{axis}[xmin=2,xmax=3,ymin=1e0,ymax=1e4,ymode=log]
			\addplot[color=black] table [col sep=comma] {csv/alpha.csv};
		\end{axis}
	\end{tikzpicture}
}
\caption{\label{fig:alpha}Runtime w.r.t. $\alpha$.}
\end{minipage}
\begin{minipage}{0.325\columnwidth}
\pgfplotsset{
x label style={at={(axis description cs:0.5,-0.1)},anchor=north},
}
\resizebox{\columnwidth}{35mm}{
	\begin{tikzpicture}
		\begin{axis}[y tick label style={/pgf/number format/.cd,fixed,fixed zerofill,precision=2,/tikz/.cd},/pgf/number format/.cd,1000 sep={},xlabel={Number of agents},xtick=data,ylabel={C-Link solution / CFSS solution},ytick={1.03,1.04,...,1.1},xmin=100,xmax=2732,ymin=1.02,ymax=1.1]
			\addplot[color=black,error bars/.cd, y dir=both, y explicit] table [col sep=comma,x index=0, y index=1, y error index=2] {csv/clink.csv};
		\end{axis}
	\end{tikzpicture}
}
\caption{\label{fig:clink}C-Link vs. CFSS.}
\end{minipage}
\begin{minipage}{0.325\columnwidth}
\pgfplotsset{
ytick={1,...,12},
xtick={4,6,...,24},
xlabel={Number of threads},
ylabel={Speed-up},
legend style={at={(0.97,0.03)},anchor=south east},
x label style={at={(axis description cs:0.5,-0.1)},anchor=north},
}
\resizebox{\columnwidth}{35mm}{
	\begin{tikzpicture}
		\begin{axis}[xmin=4,xmax=24,ymin=0,ymax=12,legend cell align=left,legend entries={Amdahl's Law ($94\%$),CFSS}]
			\addplot[color=black,dashed] table [col sep=comma] {csv/amdahl.csv};
			\addplot[color=black] table [col sep=comma] {csv/mt.csv};

		\end{axis}
	\end{tikzpicture}
}
\caption{\label{fig:mt}Parallel speed-up.}
\end{minipage}
\end{figure}
 
\subsection{P-CFSS}\label{sec:pcfss}

Here we detail the parallelisation approach of the multi-threaded version of CFSS, analysing the speed-up with respect to its serial version. 
Following \citeN{bookwefound}, parallelisation is achieved by having different threads searching different branches of the search tree. The only required synchronisation point is the computation of the current best solution that must be read and updated by every thread. 
In particular, the distribution of the computational burden among the $t_a$ available threads is done by considering the first $i$ subtrees rooted in every node of the first generation (starting from the left) and assigning each of them to $t_j$ threads ($1 \le j \le i$). The remaining rightmost subtrees are computed by a team of $t_a - \sum_{j=1}^i t_j$ threads using a dynamic schedule.\footnote{Once a thread has completed the computation of one subtree, it starts with one of the remaining ones.} Parameters $i$ and $t_j$ are arbitrarily set, since it is assumed (and verified by an empirical analysis) that the distribution of the nodes over the search tree does not significantly vary among different instances. More advanced techniques, such as estimating the number of nodes in the search tree as suggested by \citeN{DBLP:conf/ijcai/LelisOD13}, will be considered in the future.
We run P-CFSS on random instances with 27 agents on scale-free networks with $m=2$, using a machine with 2 Intel\textregistered\ Xeon\textregistered\ E5-2420 processors. The speed-up measured during these tests has been compared with the maximum theoretical one provided by Amdahl's Law, considering an estimated non-parallelisable part of 6\%, due to memory allocation and thread initialisation. 

As can be seen in Figure \ref{fig:mt}, the actual speed-up follows the theoretical one up to 12 threads, the number of physical cores. After that, hyper-threading still provides some improvement, reaching a final speed-up of 9.44 with all 24 threads active.

\section{Conclusions}\label{sec:conclusions}

\noindent
In this paper we considered the GCCF problem and proposed a branch and bound solution (the CFSS algorithm) that can be applied to a general class of functions (i.e., $\mplusa$ functions).
Our empirical evaluation shows that CFSS outperforms DyCE, the state of the art algorithm, when applied to three characteristic functions.
Specifically, CFSS is at least 3 orders of magnitude faster than DyCE in the first scenario, while solving bigger instances for the remaining two.
Moreover, the adoption of our edge ordering heuristic provides a further speed-up of 296\%. P-CFSS, the parallel version of CFSS, achieves a speed-up of 944\% on a 12-core machine, close to the maximum theoretical speed-up.
Finally, our algorithm provides approximate solutions with good quality guarantees (i.e., with a MPR of 1.12 in the worst case) for systems of unprecedented scale (i.e., more than 2700 agents). Overall, our work is the first to show how coalition formation techniques can start coping with real-world scenarios, opening the possibility of employing coalition formation on practical applications, rather than purely synthetic, small-scale environments. 

Future work will look at applying our approach to other realistic scenarios (e.g., the formation of team of experts connected by a social network~\cite{lappas2009finding}) and 
focusing on different multi-threading models (e.g., GPUs).

\appendixhead{URLend}

\begin{acks}\label{sec:ack}
COR (TIN 2012-38876-C02-01), Collectiveware TIN 2015-66863-C2-1-R (MINECO/FEDER), and the Generalitat of Catalunya 2014-SGR-118 funded Cerquides and Rodr\'iguez-Aguilar. This work was also supported by the EPSRC-Funded ORCHID Project EP/I011587/1.
\end{acks}

\bibliographystyle{ACM-Reference-Format-Journals}
\bibliography{tist}

\end{document}